%% file: paper.tex
\documentclass[sigconf, nonacm]{acmart}

\AtBeginDocument{%
  }

\usepackage{subcaption}
\usepackage{graphicx}
\usepackage{multirow}
\usepackage{makecell}
\usepackage{array}
\usepackage{booktabs}
\usepackage{longtable}
\usepackage{ragged2e}
\usepackage{amsmath}
\usepackage{amsthm}
\usepackage{xspace}
\usepackage{enumitem}
\usepackage{thmtools}
\usepackage{thm-restate}
\usepackage{booktabs}
\usepackage{xcolor}

\usepackage{amssymb}
\usepackage{pifont}
\usepackage{stfloats}

\declaretheorem[name=Definition, numberwithin=section]{myDef}

\declaretheorem[name=Lemma, numberwithin=section]{myLem}

\newcommand{\sysname}{\texttt{PACC}\xspace}

\newcommand{\cmark}{\textcolor{teal}{\ding{51}}}
\newcommand{\xmark}{\textcolor{red}{\ding{55}}}
\newcommand{\mused}{\textcolor{black}{\ensuremath{\bullet}}}
\newcommand{\mnotused}{\textcolor{black}{\ensuremath{\circ}}} 

\setcopyright{none}
\acmConference[]{}{}{} 
\acmDOI{}
\acmISBN{}
\begin{document}

%%
%% The "title" command has an optional parameter,
%% allowing the author to define a "short title" to be used in page headers.
\title{\sysname: Protocol-Aware Cross-Layer Compression for Compact Network Traffic Representation}

\author{Zhaochen Guo}
\affiliation{%
  \institution{University of Electronic Science and Technology of China}
  \city{Chengdu}
  \state{Sichuan}
  \country{China}
}
\email{2023310101003@std.uestc.edu.cn}

\author{Tianyufei Zhou}
\affiliation{
  \institution{University of Hong Kong}
  \city{Hong Kong}
  \country{China}
}
\email{tianyufei.zhou@connect.hku.hk}

\author{Honghao Wang}
\affiliation{%
 \institution{Renmin University of China}
 \city{Beijing}
 \country{China}
}
\email{2023202755@ruc.edu.cn}

\author{Ronghua Li}
\affiliation{%
 \institution{Hong Kong Polytechnic University}
  \city{Hong Kong}
 \country{China}
}
\email{cory-ronghua.li@connect.polyu.hk}

\author{Shinan Liu}
% \authornote{Corresponding author.}
\affiliation{%
  \institution{University of Hong Kong}
  \city{Hong Kong}
  \country{China}}
\email{shinan6@hku.hk}

\renewcommand{\shortauthors}{ et al.}

%%
%% The abstract is a short summary of the work to be presented in the
%% article.
\begin{abstract}
Network traffic classification is a core primitive for network security and management, yet it is increasingly challenged by pervasive encryption and evolving protocols. A central bottleneck is \emph{representation}: hand-crafted flow statistics are efficient but often too lossy, raw-bit encodings can be accurate but are costly, and recent pre-trained embeddings provide transfer but frequently flatten the protocol stack and entangle signals across layers.

We observe that real traffic contains substantial \emph{redundancy} both \emph{across} network layers and \emph{within} each layer; existing paradigms do not explicitly identify and remove this redundancy, leading to wasted capacity, shortcut learning, and degraded generalization. To address this, we propose \sysname, a redundancy-aware, layer-aware representation framework. \sysname treats the protocol stack as multiview inputs and learns compact layer-wise projections that remain faithful to each layer while explicitly factorizing representations into \emph{shared} (cross-layer) and \emph{private} (layer-specific) components. We operationalize these goals with a joint objective that preserves layer-specific information via reconstruction, captures shared structure via contrastive mutual-information learning, and maximizes task-relevant information via supervised losses, yielding compact latents suitable for efficient inference.

Across datasets covering encrypted application classification, IoT device identification, and intrusion detection, \sysname consistently outperforms feature-engineered and raw-bit baselines. On encrypted subsets, it achieves up to a 12.9\% accuracy improvement over nPrint. \sysname matches or surpasses strong foundation-model baselines. At the same time, it improves end-to-end efficiency by up to 3.16×.

\end{abstract}

% \keywords{Network Traffic Analysis, Representation Learning.}

\maketitle

\input{sections/introduction.tex}

\input{sections/related}
\input{sections/methods}

\input{sections/experiment}

\input{sections/conclusion}

\clearpage
\bibliographystyle{ACM-Reference-Format}
\bibliography{reference}

\appendix
\input{sections/appendix}

\end{document}

%% file: sections/introduction.tex
\section{Introduction}

% \todo{adjust the narrative perspective at the beginning of the article. We argue for a dialectical view of redundancy in network traffic. While raw traffic exhibits substantial syntactic redundancy (e.g., repetitive header fields) that wastes capacity, the cross-layer correlation represents valuable semantic redundancy—or consensus—that serves as robust evidence for classification. Therefore, our goal is not merely to 'remove' redundancy, but to compress the syntactic redundancy into compact embeddings, while leveraging the semantic redundancy to align shared evidence across layers. Simultaneously, we must recognize that 'non-redundant' information can also constitutes unique, layer-specific evidence critical for the task. Our method aims to balance these aspects: distilling consensus for robustness while preserving unique signals for complementarity, so we should try to avoid using words like 'avoid' or 'eliminate' in the article.}

Network traffic classification is a fundamental building block for modern network security
(e.g., intrusion detection~\cite{mirsky2018kitsuneensembleautoencodersonline,gupta2025generative}, malware analysis~\cite{shibahara2016efficient}, anomaly detection~\cite{radford2018network,li2025wifinger}) and network management
(e.g., application identification~\cite{jiang2025jiti,shaowang2025algorithmic,chu2025netssm}, QoE inference~\cite{qoe2025kulkarni,liu2024serveflow,mangla2018emimic}, capacity planning~\cite{liu2023leaf,zhu2021network}).
As network protocols and applications evolve, the problem has also shifted:
today, the majority of traffic payload is encrypted~\cite{google2026https},
making ``what can be observed'' in raw packets and flows a moving target.

Over the past decades, the community has developed a sequence of traffic representation paradigms.
Early systems relied on \emph{coarse-grained flow statistics} \cite{jiang2025jiti,liu2023amir,yin2022practical,wan2025cato} (e.g., packet size distributions,
inter-arrival times, burstiness, and summary counters). 
% as exemplified by methods such as
% AppScanner \cite{taylor2017robust} and FlowPrint \cite{van2020flowprint}.
These features~\cite{taylor2017robust,van2020flowprint,8737507} are computationally efficient and 
easy to deploy, however, they discard fine-grained evidence. 
Moving forward, \emph{raw-bit (or raw-byte) representations} directly encode packet
headers and payload bytes for deep models (e.g., Deep Packet \cite{lotfollahi2020deep} and nPrint-style \cite{holland2021new,jiang2024netdiffusion} encodings).
These representations can be accurate, but they are often sparsely high-dimensional and
computationally expensive, yielding
thousands of bits/tokens per flow, increasing training difficulty and inference cost.
More recently, \emph{pre-trained model embeddings}~\cite{lin2022bert, zhou2025trafficformer, guthula2023netfound}
have emerged as a compelling alternative: they amortize knowledge into a
large backbone and provide a strong transfer.
However, many embedding pipelines treat traffic as a monolithic sequence and thereby
\emph{ignore the hierarchical structure of the protocol stack}; they tend to entangle signals across layers and suffer performance degradation under downstream domain shifts.

Moreover, none of the aforementioned traffic representation paradigms is designed to reflect one key insight:
\emph{network traffic exhibits substantial redundancy both across layers and within layers}.
In this paper, we take a systematic view of redundancy in network traffic.
Raw packet streams contain substantial \emph{layer-specific redundancy}---repetitive header fields,
predictable counters, and session-specific artifacts---that inflates representations and consumes
computational budget.
At the same time, the protocol stack repeatedly encodes related semantics across layers; these
cross-layer correlations form \emph{cross-layer redundancy}, or consensus, that provides robust evidence
for classification.
Accordingly, our goal is not to frame redundancy as a single nuisance, but to
\emph{compress repetitive information}, \emph{align shared evidence across layers},
and \emph{retain distinctive layer-specific cues} that strengthen complementarity. Fig.~\ref{fig:venn} illustrates the shared--private principle for cross-layer traffic on two layers.

% \update{The original two paragraphs seems overlapping with the paragraph above and get commented}
% This paper builds on one key insight:
% \emph{network traffic exhibits substantial redundancy both \textbf{across layers} (consensus) and
% \textbf{within layers} (overhead)}, and existing representation paradigms do not explicitly
% distinguish these roles.
% On the one hand, across layers, related semantics repeatedly emerge in protocol stacks---e.g.,
% length and fragmentation information echoed across headers, correlated flags and control fields,
% and application behavior that induces consistent patterns in transport/network metadata, which forms the consensus that stabilizes classification under shifts.
% On the other hand, within a layer, many bits are constant, predictable, or only weakly task-relevant
% (e.g., checksums, ephemeral identifiers, session-specific artifacts); this syntactic redundancy
% inflates representations and consumes model capacity.
% Moreover, some \emph{non-repeated} bits are genuinely unique, whose layer-specific fields can
% constitute evidence that complements the cross-layer consensus.

\begin{figure}[t]
    \centering
    \includegraphics[width=1.05\linewidth]{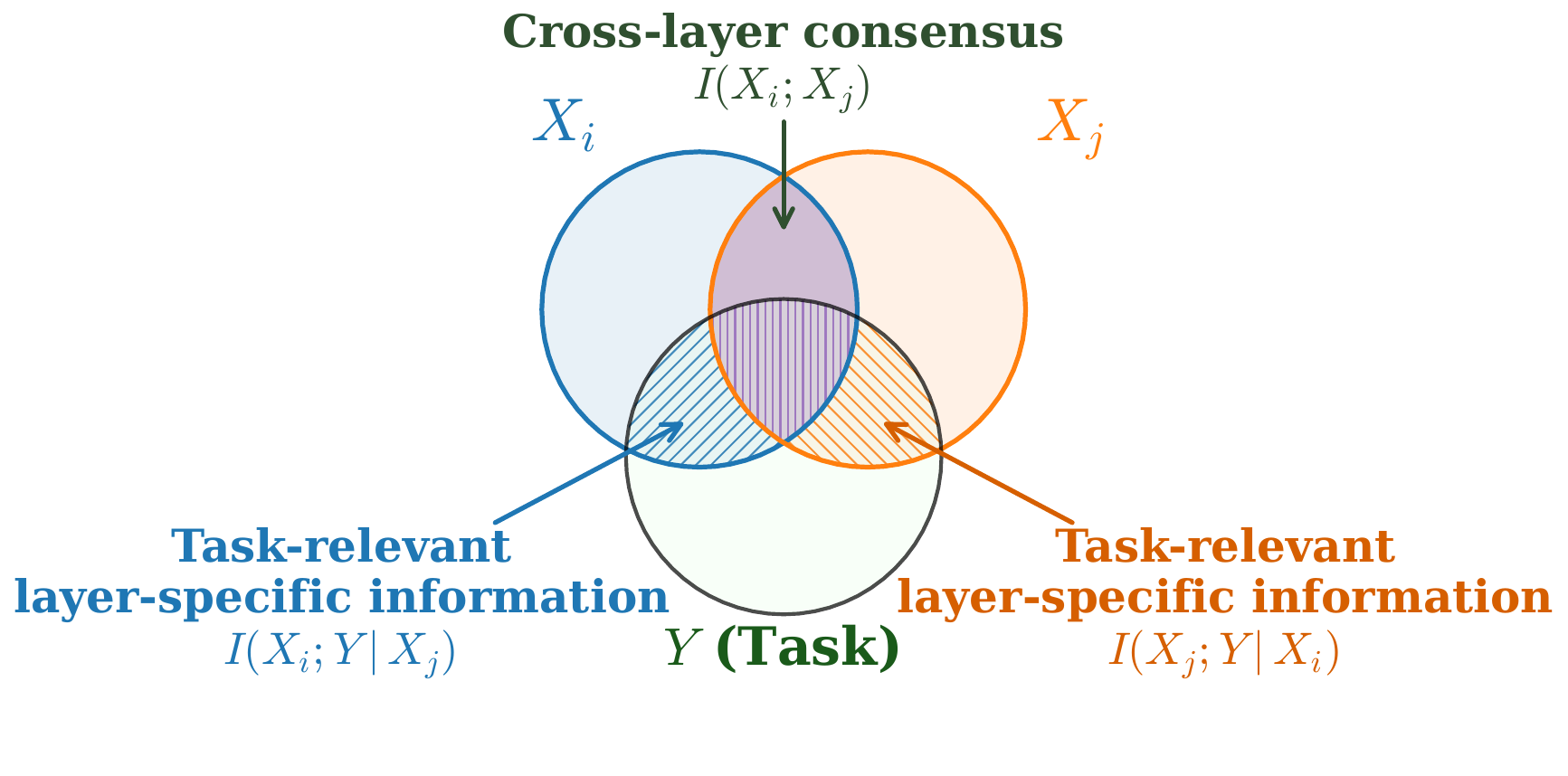}
    \vspace{-0.9cm}
    \caption{
    Illustration of the shared--private principle for cross-layer traffic.
    We seek representations that capture both cross-layer consensus $I(X_i;X_j)$ and layer-specific, task-relevant
    evidence such as $I(X_i;Y \mid X_j)$.}
    \label{fig:venn}
    \vspace{-0.6cm}
\end{figure}

% Consequently, directly concatenating multi-layer raw bits creates representations that are large
% but not necessarily condensed, while collapsing layers into a single embedding can obscure which
% information is shared (semantically redundant), unique (complementary), and uninformative
% (syntactically redundant).

Motivated by this, we propose \sysname, a cross-layer traffic representation framework
that \emph{systematically exploits redundancy} to produce compact and generalizable
features.
\sysname treats the protocol stack as a set of \emph{layer views} and learns
(i) compact layer-wise projections that preserve essential information,
(ii) an explicit separation between \emph{shared} information (redundant across layers)
and \emph{private} information (layer-specific information),
and (iii) a joint optimization strategy that encourages the final representation to be
\emph{informative for downstream tasks} while keeping redundant capacity usage in check.
At a high level, \sysname is guided by three design principles:
\textbf{(P1) Layer-awareness:} respect protocol hierarchy instead of flattening it;
\textbf{(P2) Redundancy control:} explicitly model shared vs.\ unique information to
reduce duplicated signals across layers; and
\textbf{(P3) Efficiency without sacrificing accuracy:} compress high-dimensional raw
encodings into compact latents suitable for scalable training and low-latency inference.

We evaluate \sysname on multiple traffic classification settings, including encrypted
application classification, IoT device classification, and intrusion detection, and
compare against feature-engineered methods, raw-bit baselines, and pre-trained embedding
approaches. Overall, the results show that protocol-aware cross-layer representations
can substantially improve practical accuracy/robustness relative to traditional features,
while remaining efficient and competitive against large pre-training pipelines. \sysname
also provides interpretability at the layer level: it clarifies \emph{which layers} are
contributing shared versus unique evidence, which is valuable for debugging and for
deployments under partial observability.
We summarize our main contributions as follows:
\begin{itemize}[leftmargin=*]
  \item We formalize network traffic as \emph{cross-layer "multiview" data} and motivate
  redundancy as a first-class property that impacts both efficiency and generalization.
  \item We empirically analyze redundancy in raw cross-layer traffic using information-theoretic
  and compression-based measures, demonstrating substantial redundancy across and within layers.
  \item We propose \sysname, a redundancy-aware cross-layer representation framework
  that learns compact, task-relevant representations by modeling shared and private
  information across layers.
  \item We conduct extensive experiments across diverse datasets and objectives, showing
  that \sysname improves over feature-engineered and raw-bit baselines and is competitive
  with modern pre-training approaches, while enabling efficient inference.
\end{itemize}

%% file: sections/related.tex
\section{Background and Motivation}

\subsection{Network Traffic \& Representations}

A network flow is a temporally ordered sequence of packets exchanged between two endpoints.
Each packet is structured by the protocol stack, meaning that a single communication event leaves
traces at multiple layers.
Headers at the link, network, and transport layers expose metadata and control semantics, while the
application layer (often encrypted \cite{rfc8446}) contributes higher-level behavior that still manifests through side channels such as record sizes, burst patterns, and timing.

In the context of traffic analysis, a \textbf{representation} is a structured abstraction of network flows on a feature space that captures the underlying communication patterns \cite{wang2017malware, 9127874,yang2025towards}. While the layered organization of flow packets naturally suggests a multiview abstraction, existing traffic representations compress and organize the evidence in different ways. 
\textbf{Flow statistics} aggregate packets into summary counters and distributions~\cite{taylor2017robust, van2020flowprint, 8737507} (counts,
size histograms, inter-arrival summaries, burst metrics).
They are lightweight and naturally condense large amounts of within-layer repetition, yet this
aggregation also smooths out fine-grained cross-layer agreement and rare but discriminative events.
As encryption and protocol evolution reduce stable plaintext cues, these coarse summaries can become
fragile across environments. In recent years, the following two trends have been significantly more favored: 
\textbf{Raw-bit/byte encodings} preserve packet structure more directly by encoding headers~\cite{end2end,tfegnn2023zhang, ebsnn2022xi,meng2022packet, tran2025quantifying} (and
sometimes payload bytes~\cite{huoh2022flow}) as high-dimensional sequences.
They can be accurate, but the sequences contain substantial syntactic repetition (predictable header fields) and session-specific artifacts alongside genuinely useful evidence, which increases both compute cost and the risk of learning incidental correlations.
\textbf{Pre-trained embeddings} demonstrate superior performance on downstream tasks~\cite{zhao2023yet,lin2022bert,zhou2025trafficformer,ptu2024}. However, many pipelines flatten the raw-bit protocol stacks into a single stream.
This makes it harder to reason about which evidence recurs across layers as shared consensus versus which evidence is distinctive to a given layer and therefore complementary.

Fig.~\ref{Visual} illustrates the class-level distinguishability of the current representation learning paradigms in a two-dimensional feature space, where the coordinates are projected based on the pseudo-probability vectors and prediction confidence values derived from category-wise probability distributions. The negative Silhouette coefficient~\cite{rousseeuw1987silhouettes} of \emph{Raw-bit representations} (-0.0882) demonstrates that the raw feature space exhibits high degrees of chaos, rendering it inherently challenging to directly retrieve task-relevant information amid excessive syntactic redundancy. Although \emph{Monolithic embeddings} (e.g., nPrint~\cite{holland2021new} and NetMamba~\cite{wang2024netmamba}) project flattened raw-bit stacks into a latent space to enhance representational quality—achieving higher Silhouette coefficients (0.2804 and 0.3200, respectively)—their clustering results still suffer from substantial category ambiguity and spurious noise.

Driven by the entanglement of decision boundaries observed in the aforementioned representations, this work puts forward two core insights:
\begin{itemize}[leftmargin=*]
    \item The chaotic nature of raw-bit features may introduce excessive redundant information into direct representation learning processes, hindering the extraction of task-relevant features.
    \item Even when compressing across layer-stacks, such operations may amplify the contributions of non-significant feature components while attenuating the importance of principal components that dominate task performance.
\end{itemize}
Accordingly, this work aims to explore a more robust representation paradigm, namely \emph{cross-layer multiview abstraction}. This paradigm treats individual layers (or layer groups) as distinct views prior to fusion, with its learning objective adhering to a dialectical perspective on redundancy: it consolidates layer-specific repetitions into compact latent representations while aligning shared discriminative evidence across layers, without compromising distinctive cues that enhance cross-layer complementarity.

\begin{figure}[t]
    \centering
    % \vspace{-0.5em}

    \begin{minipage}{0.31\linewidth}
        \centering
        \subfloat[Raw: -0.0882]{\includegraphics[width=\linewidth]{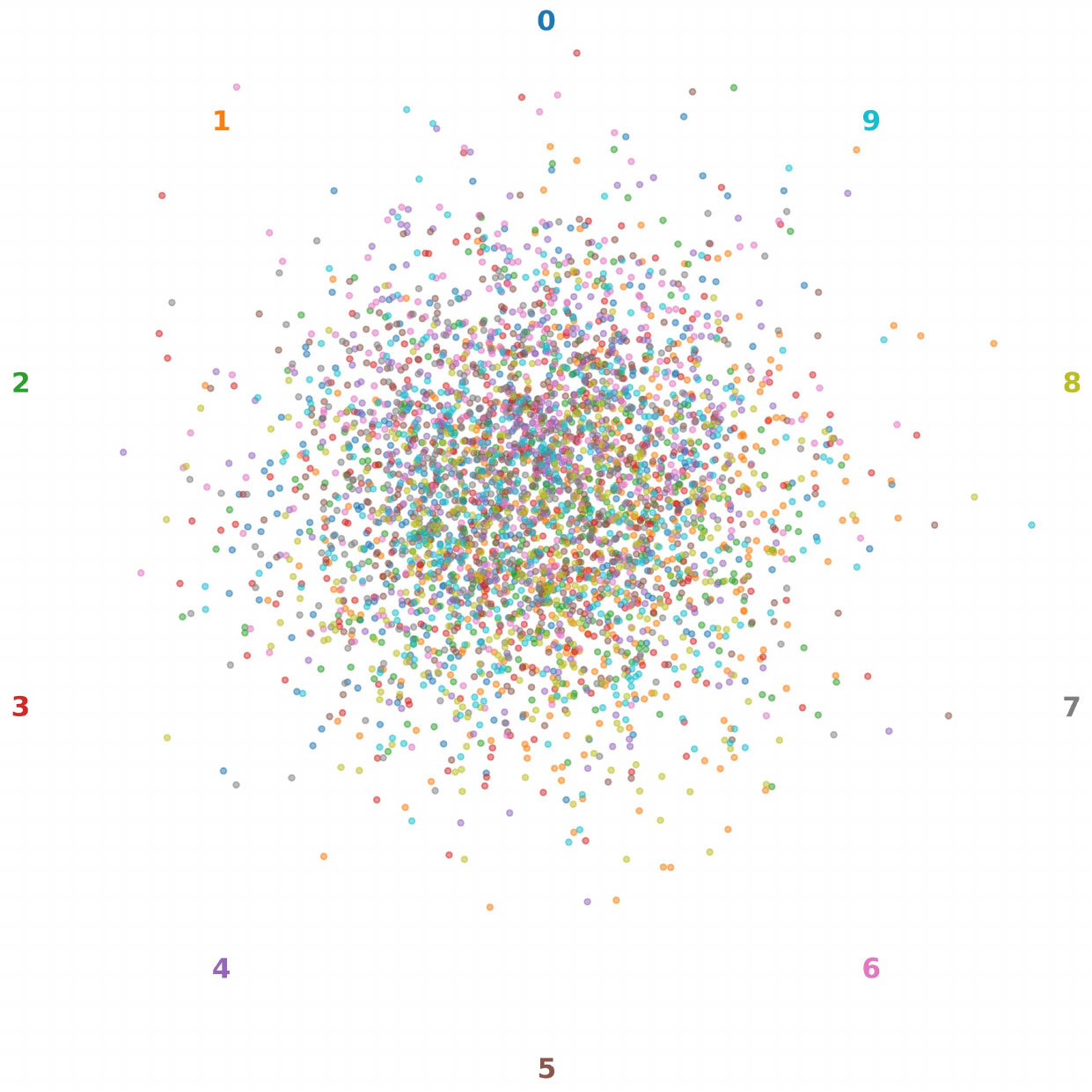}\label{Raw}}
    \end{minipage}
    \hfill %
    \begin{minipage}{0.31\linewidth}
        \centering
        \subfloat[nPrint: 0.2804]{\includegraphics[width=\linewidth]{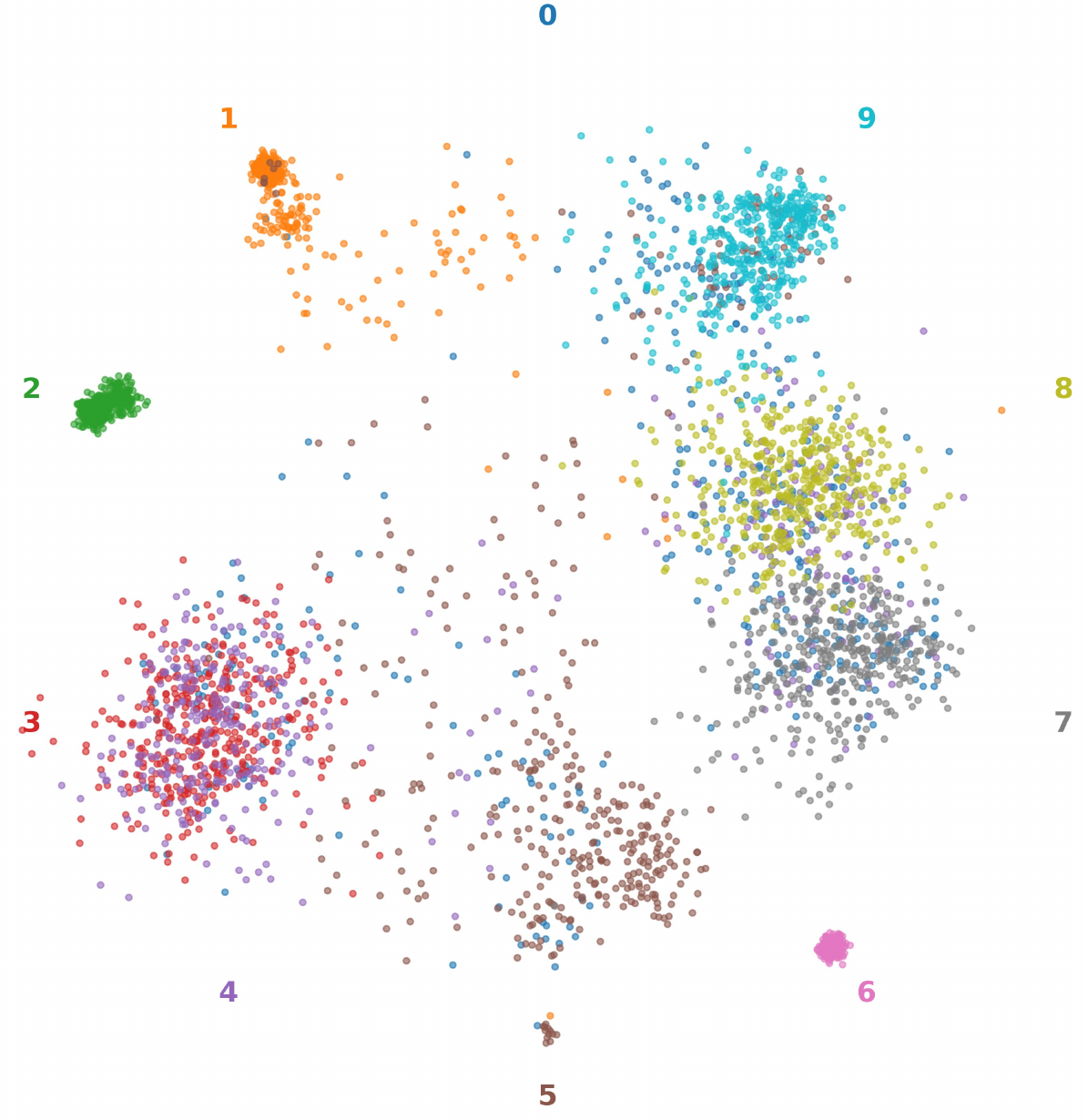}\label{npt}}
    \end{minipage}
    \hfill %
    % \vspace{3pt}
    \begin{minipage}{0.31\linewidth}
        \centering
        \subfloat[NetMamba: 0.3200]{\includegraphics[width=\linewidth]{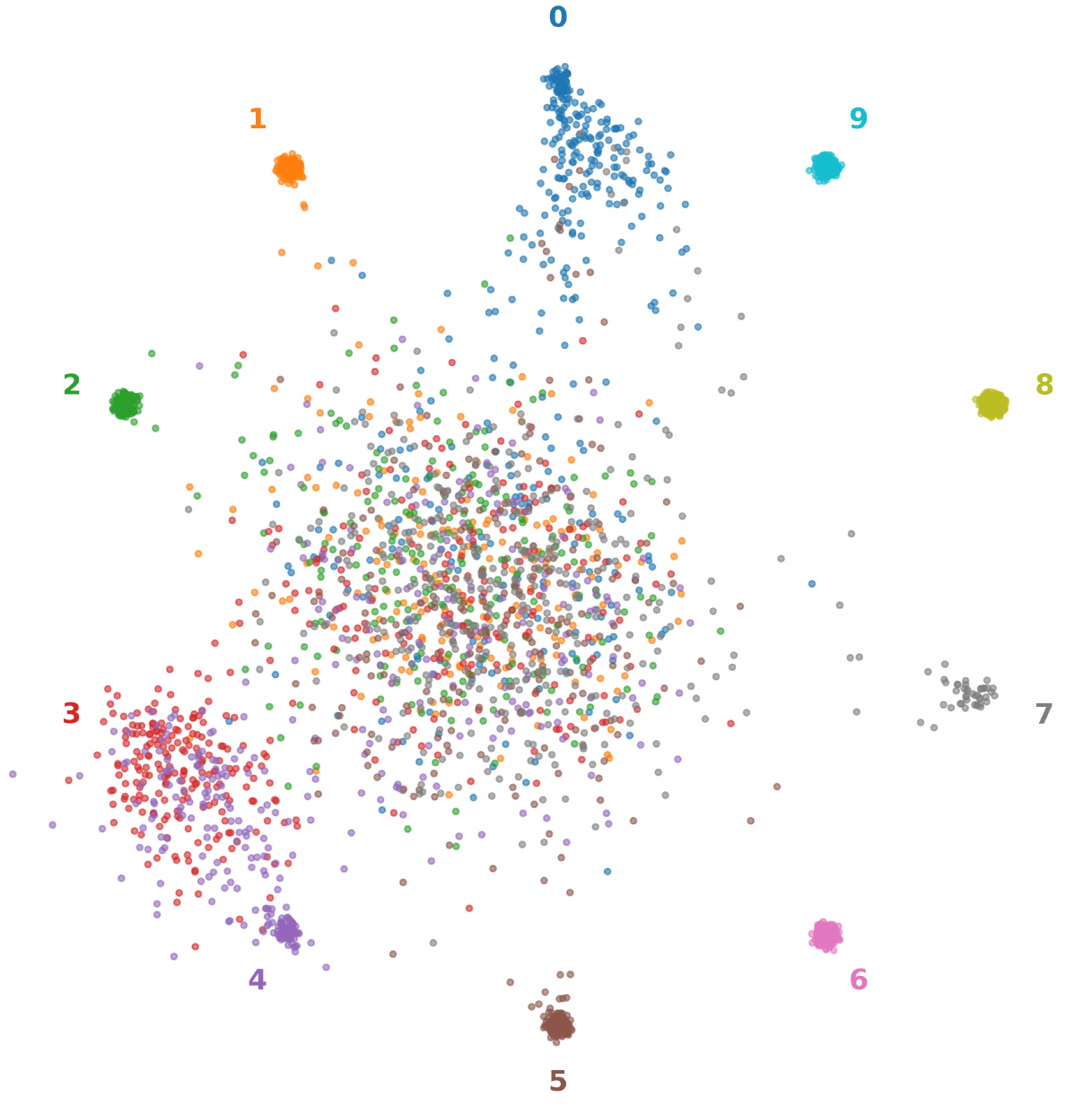}\label{nm}}
    \end{minipage}
    %\hfill %
    % \begin{minipage}{0.49\linewidth}
    %     \centering
    %     \subfloat[\sysname: 0.3776]{\includegraphics[width=\linewidth]{figures/ours.pdf}\label{Ours}}
    % \end{minipage}

    \vspace{-1.0em}
    \caption{\textbf{Representation structure on CipherSpectrum-4.}
    Two-dimensional projections of node representations with Silhouette scores.
    % Raw bytes form heavily overlapping clusters (negative Silhouette), while learned representations yield clearer
    % class structure; \sysname achieves the strongest separation (0.3776), exceeding nPrint (0.2804) and NetMamba (0.3200).
    }
    \vspace{-0.5cm}
    \label{Visual}
\end{figure}

\textbf{Representation optimization in other domains.}
Prior work on multiview or multimodal representation learning often decomposes an input into aligned modalities and reports two recurring gains. First, \emph{contrastive alignment} uses cross-modal agreement as supervision \cite{radford2021learning, tian2020contrastive, yuan2021multimodal, mustafa2022multimodal}; by rewarding information that is shared across modalities, these objectives tend to highlight stable semantic evidence while reducing dependence on modality-specific syntactic repetition and environment-dependent noise that do not align across views. Second, \emph{multimodal fusion} methods integrate modality-specific features into a joint embedding that preserves complementary signals and often improves over uni-modal baselines \cite{zadeh2017tensor, tsai2019multimodal, guo2025disentangling}. These results suggest an analogous design for encrypted-traffic representation learning: treating protocol layers as separate but correlated modalities and learning both cross-layer consensus and layer-specific information can help mitigate the brittleness of features derived from any single layer.

% In this work, we aim to investigate a more robust representation paradigm, i.e., \emph{the cross-layer multimodal abstraction}. Each layer (or layer group) will be treated as a distinct modality before fusing them as the final representation.
% Such an intuition is posed by a more precise question:
% \emph{for a given task, which parts of the evidence recur across layers as semantic consensus, and which
% parts contribute unique, layer-specific signals?}
% Our learning objective follows the dialectical view of redundancy: consolidate within-layer repetition
% into compact latents while aligning shared evidence across layers, without sacrificing distinctive cues
% that strengthen complementarity.

% \takeaway{Treating protocol layers as modalities provides the right abstraction to distill cross-layer consensus and retain layer-specific evidence in a compact representation.}

\subsection{Redundancy Analysis}
\label{sec:redundancy-analysis}

Redundancy in multi-layer network traffic traces has two roles:

\textbf{Cross-layer redundancy/consensus.}
Different layers encode coupled aspects of the same communication event, so related semantics are
echoed throughout the stack.
Even with strong encryption, application behavior induces consistent patterns that propagate into
transport and network metadata (e.g., handshake dynamics, burst structure, length-related fields).
This cross-layer agreement is often robust evidence for classification.

\textbf{Layer-specific redundancy/information.}
Many header fields are constant within a protocol version, predictable from neighboring fields, or
only weakly coupled to the downstream label (e.g., checksums, ephemeral identifiers, session-specific
counters).
These bits inflate raw representations and occupy compute without necessarily improving task performance. 

We quantify these phenomena by examining cross-layer dependence ($I(X_i;X_j\mid Y)$) to assess view agreement, alongside layer-specific task relevance ($I(X_i;Y)$) and redundancy (proxied by compression ratios) in Fig.~\ref{fig:two_fig}.

The results reflect the dialectical structure of redundancy.
In Fig.~\ref{fig:raw_cmi}, raw encodings exhibit moderate shared information across most layer pairs
($\approx0.21$--$0.30$ off-diagonal), capturing a blend of representational coupling, while implicitly signifying the existence of layer-specific information that remains uncoupled.

\begin{figure}[t]
    % \vspace{-0.3cm}
    \centering
    \begin{subfigure}[c]{0.46\linewidth}
        \includegraphics[width=\linewidth]{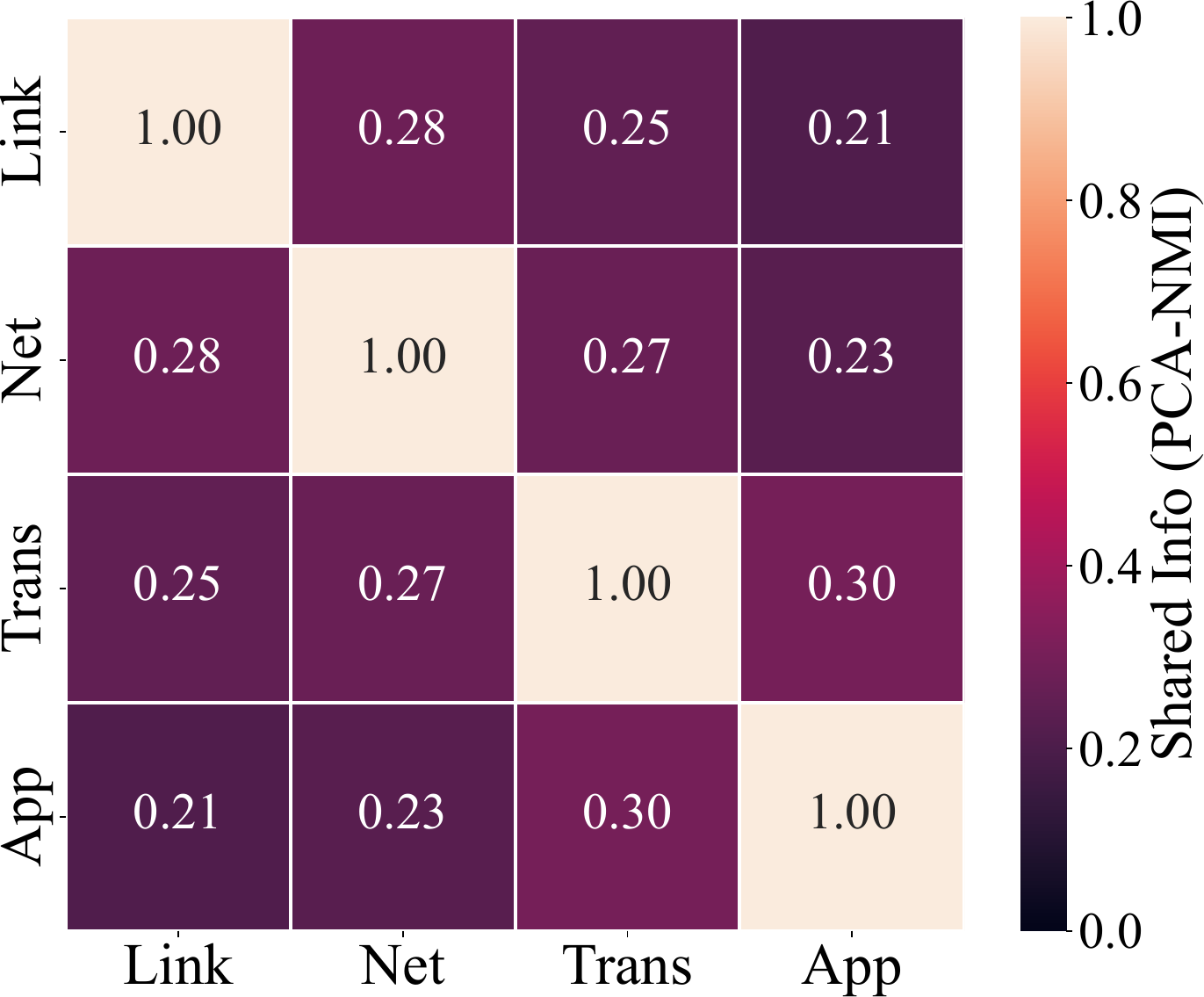}
        \caption{Cross-layer consensus}
        \label{fig:raw_cmi}
    \end{subfigure}
    \hfill
    \begin{subfigure}[c]{0.48\linewidth}
        \includegraphics[width=\linewidth]{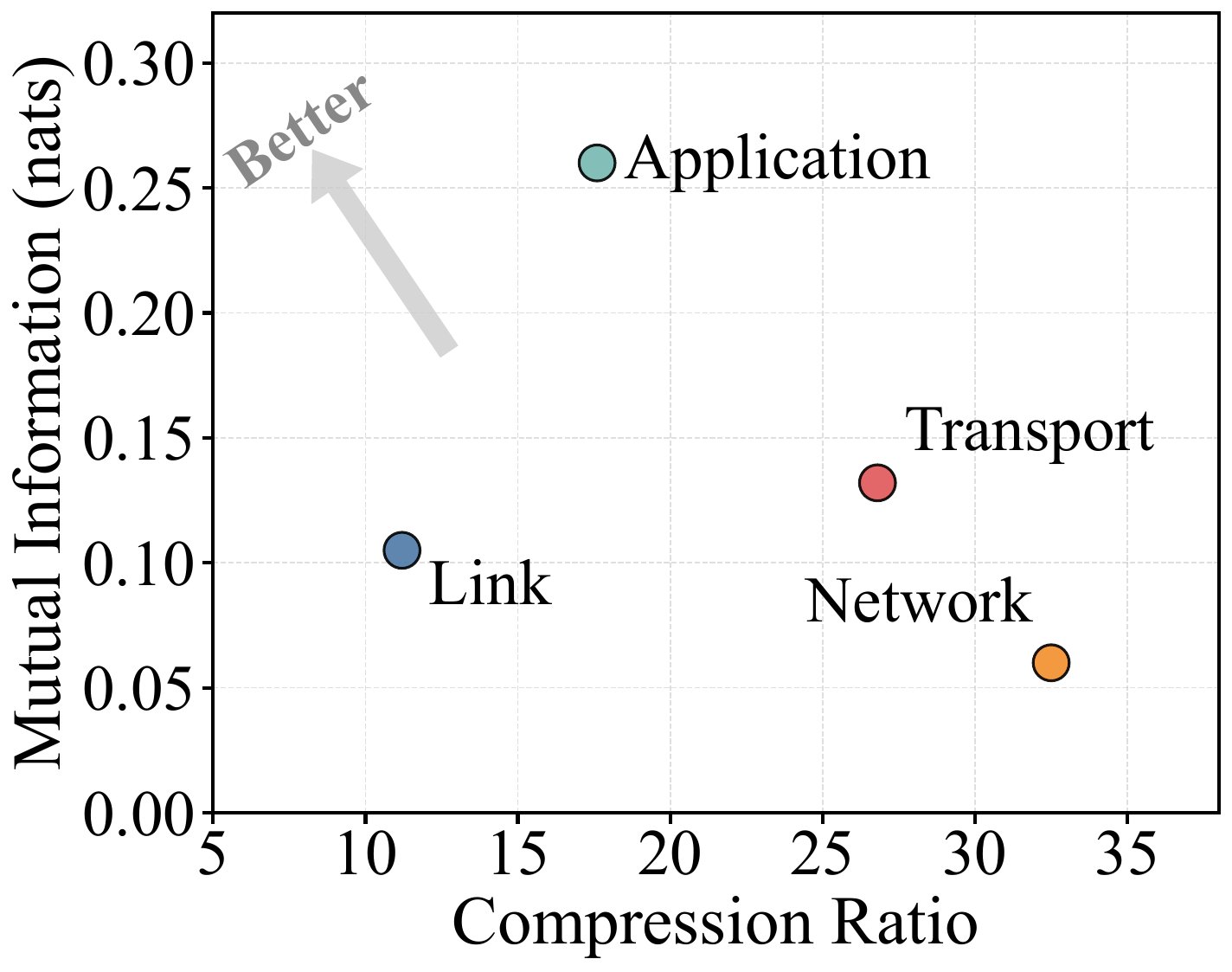}
  
        \caption{Layer-specific information}
        \label{fig:Raw Task relevance}
    \end{subfigure}

    \caption{Layer-wise perspective on NMI and compression. (a) Task-relevant Cross-layer PCA-normalized conditional mutual information $I(X_i; X_j | Y)$. (b) Task relevance quantified by $I(X_i; Y)$ (higher is better) and redundancy proxied by the compression ratio (lower implies compactness).}
    \label{fig:two_fig}
    \vspace{-0.5cm}
\end{figure}

The quantitative analysis in Fig.~\ref{fig:Raw Task relevance} reveals a striking structural mismatch across layers, most notably within the network layer. Not only maximizing redundancy (compressing by $>30\times$), it also offers minimal \emph{task relevance} ($I(X_i; Y) < 0.10$ nats). This "high-redundancy, low-relevance" profile encapsulates the fundamental challenge of raw-traffic learning: the discriminative "signals" are
needles in a haystack of protocol-mandated "noise."

Ideally, traffic abstraction requires \textbf{discriminative distillation}: filtering task-irrelevant protocol overhead via compression, while simultaneously amplifying cross-layer redundancy to harvest consensus signals.

\vspace{-0.3cm}

\section{Problem Formulation}
\label{sec:problem-statement}

We study network traffic analysis under a \emph{cross-layer, multiview} abstraction. Given a dataset of $N$ flows with labels $Y \in \{1,\dots,C\}$, we represent each flow using $M$ views (network layers) $X = \{X_1,\dots,X_M\}$, where each view \(X_i \in \mathbb{R}^{N \times d_f}\) corresponds to a specific network layer (e.g., L2/L3/L4 headers and L7 payload content~\footnote{Although all protocol layers collectively form a unified information source, commercial packet capture tools primarily provide access to L2–L4. In encrypted settings, L7 content is generally unavailable and might be omitted.}). Our goal is to learn a \emph{compact} representation that is accurate, efficient, and robust even under encryption.

Formally, we aim to learn (i) layer-specific encoders $f_{\Theta_i}$ that map each layer to a low-dimensional latent $Z_i = f_{\Theta_i}(X_i)$, and (ii) a fusion/classifier head $h_\Phi$ that predicts $\hat{Y} = h_\Phi(Z_1,\dots,Z_M)$. A key challenge is that raw traffic's stacked protocol structure is inherently redundant. Naive concatenation creates an exploded feature space, exacerbating the \emph{curse of dimensionality}; conversely, collapsing all layers into a single embedding can entangle signals and obscure which information is shared versus complementary.

This motivates the following desiderata for cross-layer traffic representations:
\begin{enumerate}[leftmargin=*]
  \item \textbf{Layer fidelity.} Each $Z_i$ should preserve essential information from its corresponding raw feature $X_i$ while reducing dimensional redundancy (i.e., avoid collapsing away useful layer-specific cues).
  \item \textbf{Redundancy-aware complementarity.} The final representation should capture task-relevant information from two essential dimensions. First, it should leverage \emph{shared information} across layers as valuable signal reinforcement, emphasizing inter-layer consistency to improve robustness. Second, it must explicitly capture \emph{layer-specific} information that provides unique, complementary evidence. Thus, as shown in figure \ref{fig:venn}, we aim to integrate shared consensus, represented by $I(X_i; X_j)$, while identifying non-redundant task-relevant contributions captured by $I(X_i; Y \mid X_j)$ in Definition \ref{Def1}.
  \item \textbf{Compactness and efficiency.} The representation should be substantially smaller
  than raw-bit encodings to enable scalable training and low-latency inference.
\end{enumerate}

\textbf{Multiview data non-redundancy.} Task-relevant information exists not only in the shared information between views but also potentially within the unique information of certain view. Following the non-redundancy principle \cite{liang2023factorized}, we provide the formal definition of Multiview data non-redundancy.

\begin{myDef}\label{Def1}
\textit{$X_i$ is considered non-redundant with $X_j$ for $Y$ if and only if there exists $\epsilon > 0$ such that the conditional mutual information satisfies $I(X_i; Y \mid X_j) > \epsilon \quad \text{or} \quad I(X_j; Y \mid X_i) > \epsilon.$}
\end{myDef}

Therefore, the problem we address is: \emph{how can we learn a cross-layer representation that (i) remains faithful to each layer, (ii) explicitly separates shared versus layer-specific information to control redundancy, and (iii) improves generalization and efficiency for downstream traffic classification?} In the next section, we present \sysname, which operationalizes these goals by learning compact layer-wise projections and modeling shared information across layers.

% \takeaway{Problem: learn \emph{compact} cross-layer traffic representations that preserve layer fidelity while explicitly controlling \emph{shared vs.\ unique} (non-redundant) task-relevant information across layers—so models generalize without paying raw-bit costs.}

%% file: sections/methods.tex
\section{Methodology}

\begin{figure*}[t]
    \centering
    \includegraphics[width=0.8\textwidth]{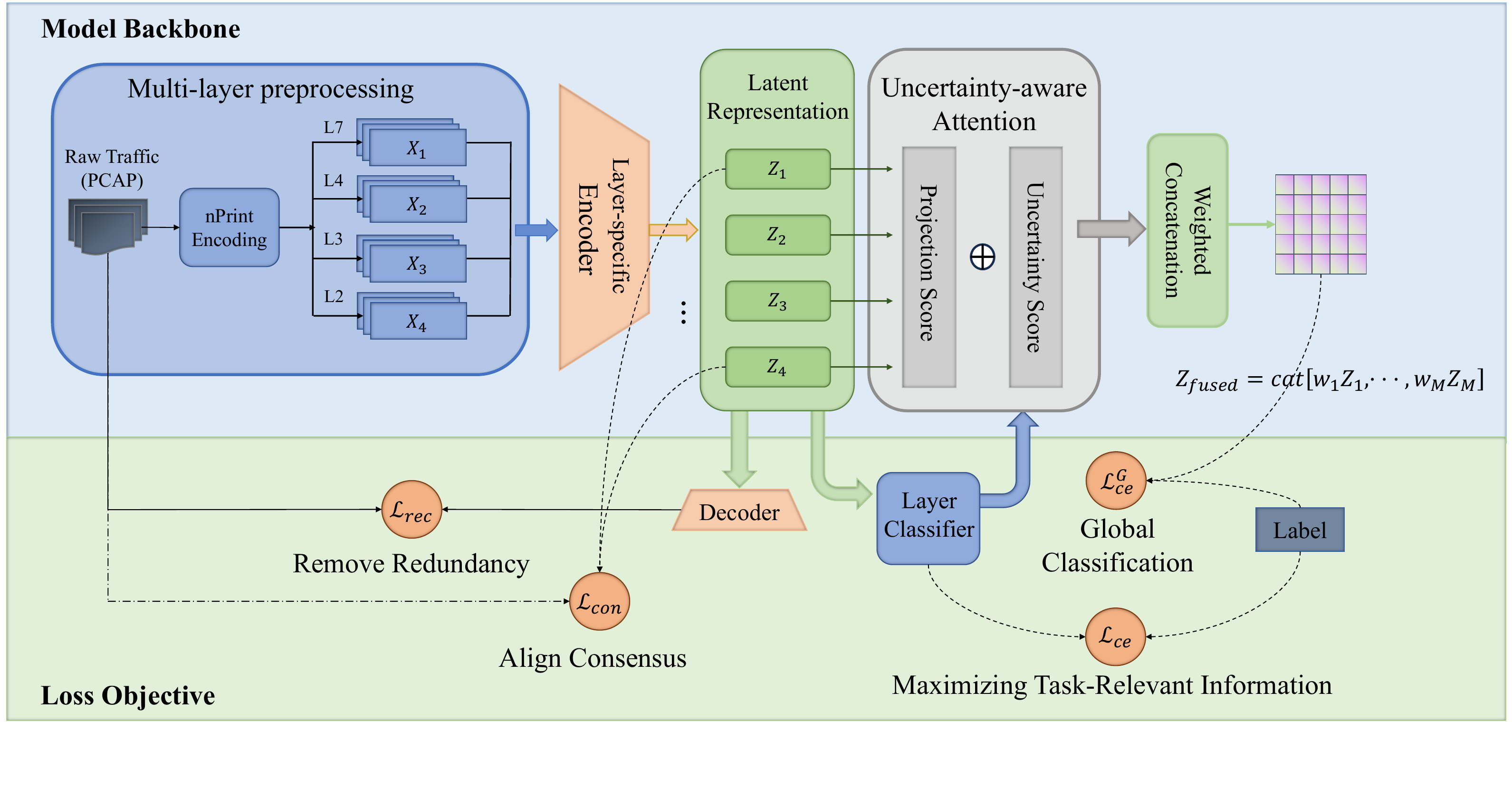}
    \caption{The overall framework of \sysname. It consists of two main parts: the Model Backbone and the Loss Objective. The backbone extracts multi-layer representations ($Z_1$ to $Z_4$) from raw traffic using nPrint encoding and fuses them via an Uncertainty-aware Attention mechanism. The model is optimized jointly using reconstruction loss ($\mathcal{L}_{rec}$), consensus loss ($\mathcal{L}_{con}$), and classification losses ($\mathcal{L}_{ce}$, $\mathcal{L}^{G}_{ce}$).}
    \label{fig:pip}
\end{figure*}

As illustrated in Fig.~\ref{fig:pip}, this section details how \sysname enhances cross-layer consensus and layer-specific signals while reducing redundancy for downstream tasks. The framework employs a shared backbone to fuse multi-layer representations via uncertainty-aware attention, trained with a joint objective combining reconstruction, consensus, and classification losses.

\subsection{Multi-layer Projection}
Raw packet embeddings (e.g., nprint) often contain excessively redundant 0/1 padding for absent control fields, impeding the efficient construction of layer-specific representations. To address this, \sysname projects high-dimensional features into representations \(Z_i\) that capture essential information from raw features \(X_i\) while reducing dimensionality. We maximize the mutual information \(I(X_i; Z_i)\) to encourage projections to retain maximal intrinsic layer information. Formally, this objective is defined as:

\begin{equation}
\arg\max_{\Theta_i} I(X_i; Z_i), \quad Z_i = f_{\Theta_i}(X_i),
\end{equation}
where \(f_{\Theta_i}\) denotes the layer-specific encoder parameterized by \(\Theta_i\). This maximization ensures \(Z_i\) preserves essential layer-specific details, yielding compact representations optimized for downstream cross-layer integration.

As directly estimating $I(X_i; Z_i)$ is intractable, we maximize a lower bound by minimizing the conditional entropy $H(X_i \mid Z_i)$, effectively reducing uncertainty in feature recovery. In practice, we use a similarity-based reconstruction loss to prioritize \emph{directional alignment}. This suits binary nPrint embeddings, where representative identity resides in the \emph{spatial pattern} (vector orientation) rather than magnitude. Specifically, using a parameterized decoder $\hat{X}_i = g_\phi(Z_i)$, we formulate the loss as the expected cosine distance:

\begin{equation} \label{recloss}
\mathcal{L}^{i}_{\text{rec}} = \frac{1}{N} \sum_{k=1}^{N} \left( 1 - \frac{{X}_{i}^{k \top} \hat{X}_i^k}{\| X_i^k \| \| \hat{X}_i^k \|} \right).
\end{equation}
We establish the theoretical grounding of this geometric objective as follows:

\begin{restatable}{myProp}{propRec}
\label{pro:rec}
Minimizing the reconstruction loss $\mathcal{L}^{i}_{\text{rec}}$ is equivalent to maximizing the variational lower bound of the mutual information $I(X_i; Z_i)$, thereby encouraging the latent representation $Z_i$ to retain faithful, layer-specific information.
\end{restatable}

The proof can be found in Appendix \ref{app:pro_rec}.

\subsection{Maximizing Cross-layer Shared Information}

Having filtered \emph{intra-layer redundancy} via projection, we focus on exploiting intrinsic \emph{cross-layer redundancy} ($I(X_i; X_j)$). We interpret this correlation as a \emph{shared consensus} derived from unified communication intent, serving as a robust anchor against layer-specific noise. To operationalize this, we model the latent space to reflect this compositional structure, positing that $Z_i$ contains a distinct component, $Z_i^{s}$, for this shared information.

\begin{restatable}{myDef}{defLatent}
\label{Definition2}
\textit{Conditional Latent Decomposition Assumption. Each layer representation $Z_i$ consists of a shared component $Z_i^{s}$ and a private component $Z_i^{u}$, denoted as $Z_i = (Z_i^{s}, Z_i^{u})$. Here, $Z_i^{s}$ represents the information common across layers, while $Z_i^{u}$ denotes layer-specific unique information. We assume that the shared component captures all information relevant for cross-layer alignment, while the private component encodes residual information. Formally, the private component is conditionally independent of other layer representations given the shared component: $Z_i^{u} \perp Z_j \mid Z_i^{s}, \forall j \neq i$.}
\end{restatable}

\begin{restatable}{myThe}{thmShared}
\label{The1}
\textit{Equivalence of Shared and Holistic Mutual Information. Maximizing the MI between the holistic representations $Z_i$ and $Z_j$ is equivalent to maximizing the MI between their shared components $Z_i^{s}$ and $Z_j^{s}$. Under the independence assumptions in Definition \ref{Definition2}, the identity $I(Z_i; Z_j) = I(Z_i^{s}; Z_j^{s})$ holds.}
\end{restatable}

It provides an idealized justification for aligning representations via shared components. The proof can be found in Appendix \ref{app:the_con}.  

\textbf{Optimization Objective.}
Theorem \ref{The1} suggests that capturing shared consensus $Z^s$ does not require complex architectural disentanglement; maximizing the mutual information between embeddings $Z_i$ and $Z_j$ is sufficient. Since directly computing $I(Z_i; Z_j)$ is intractable in high-dimensional spaces, we employ Contrastive Learning to maximize the differentiable InfoNCE lower bound \cite{oord2018representation}.

For any layer pair $i$ and $j$, the lower bound $I_{\mathrm{lb}}$ is defined as:
\begin{equation}\label{lower_bound_equ}
I_{\mathrm{lb}}(Z_i; Z_j) =
\mathbb{E}_{\substack{z_i, z_j^{+} \sim p(z_i, z_j) \\ z_j \sim p(z_j)}}
\left[
\log \frac{\exp f(z_i, z_j^{+})}{\sum_{k=1}^{N} \exp f(z_i, z_j^{(k)})}
\right],
\end{equation}
where $f(\cdot, \cdot)$ is a scoring function (implemented as a non-linear projection followed by cosine similarity). Here, $(z_i, z_j^{+})$ forms a positive pair (multi-layer views of the same sample), while $\{z_j^{(k)}\}_{k=1}^B$ comprises one positive and $B-1$ negative samples.

To enforce shared consensus across the protocol stack, we aggregate this objective over all layer pairs. The final contrastive consensus loss is formulated as:
\begin{equation}
\mathcal{L}_{\mathrm{con}} =
- \frac{2}{M(M-1)} \sum_{i=1}^M \sum_{j=i+1}^M
I_{\mathrm{lb}}(Z_i; Z_j).
\label{conloss}
\end{equation}
Minimizing $\mathcal{L}_{\mathrm{con}}$ aligns representations of the same communication event across layers, ensuring the encoder extracts the common information essential for robust classification.

\subsection{Maximizing Shared and Unique Information w.r.t. Downstream Task}

While consensus alignment is crucial, relying on it alone is insufficient. First, straightforward alignment may preserve \emph{task-irrelevant noise}—consistent cross-layer patterns that contribute little to classification. Second, it risks discarding \emph{non-redundant} information unique to specific protocols (e.g., L7 payload signatures vs. L4 flow statistics) that provides complementary evidence. To address this, we maximize the mutual information between the representation and the label, $I(Z_i; Y)$. This acts as a dual-purpose filter: purging shared noise while extracting the unique value of each layer, ensuring all retained information is discriminative.

\subsubsection{Holistic Task-Relevant Information Pursuit}

Ideally, one might maximize only the unique task-relevant information, $I(Z_i; Y \mid Z_{-i})$. However, filtering shared signals is suboptimal, as it discards consensus essential for robust prediction and increases optimization complexity. Instead, we maximize the \emph{total} task-relevant information $I(Z_i; Y)$, encompassing both contributions.

We establish that optimizing this holistic objective is tractable and implicitly enforces a dual decomposition of information utility, as formalized below:

\begin{restatable}{myProp}{propHolistic}
\label{prop:holistic_decomposition}
\textit{Optimization and Decomposition of Task-Relevant Information.
Maximizing the total task-relevant information $I(Z_i; Y)$ is mathematically equivalent to minimizing the standard cross-entropy loss $\mathcal{L}_{\mathrm{ce}}$ via a variational lower bound. Furthermore, this optimization implicitly maximizes two complementary information components:
\begin{enumerate}[leftmargin=*]
\item Inter-Layer decomposition: The sum of layer-specific evidence and cross-layer consensus:
\begin{equation} \label{eq:inter_decomp}
I(Z_i; Y) = I(Z_i; Y \mid Z_{-i}) + I(Z_i; Z_{-i}; Y).
\end{equation}
\item Intra-Layer decomposition: The sum of private feature utility and shared feature utility conditioned on private features:
\begin{equation} \label{eq:intra_decomp}
I(Z_i; Y) = I(Z_i^u; Y) + I(Z_i^s; Y \mid Z_i^u).
\end{equation}
\end{enumerate}}
\end{restatable}

The proof can be found in Appendix \ref{app:prop2}.

\subsubsection{Dual Implications of Information Decomposition}

Proposition \ref{prop:holistic_decomposition} highlights the framework's dual advantages, bridging inter-layer dynamics with intra-layer efficiency. From an inter-layer perspective, Eq. (\ref{eq:inter_decomp}) shows that maximizing $I(Z_i; Y)$ implicitly captures both layer-specific evidence $I(Z_i; Y \mid Z_{-i})$ and task-relevant consensus $I(Z_i; Z_{-i}; Y)$. The former preserves unique contributions, while the latter retains shared information essential for prediction. From an intra-layer perspective, Eq. (\ref{eq:intra_decomp}) reveals that maximizing $I(Z_i; Y)$ simultaneously forces the private component $Z_i^u$ to be predictive and reinforces the task-relevance of the shared component $Z_i^s$. This ensures no useful signal—whether globally shared or locally private—is discarded.

\subsubsection{Implementation via Variational Lower Bound}
As shown in Proposition \ref{prop:holistic_decomposition}, minimizing standard cross-entropy loss maximizes the mutual information lower bound. Accordingly, we define the optimization objective as:
\begin{equation}
\mathcal{L}_{\mathrm{ce}} = - \frac{1}{M} \sum_{i=1}^{M} \sum_{c=1}^{C} y_{ic} \log(\hat{y}_{ic}),
\end{equation}
where $M$ is the number of layers, $C$ the class count, $y_{ic}$ the ground truth, and $\hat{y}_{ic} = q_\theta(c \mid z_i)$ the predicted probability. This objective drives each layer to extract maximally task-relevant features.

\subsection{Uncertainty-Aware Fusion and Joint Optimization}
\label{sec:fusion_opt}

The discriminative power of protocol layers is inherently non-uniform across traffic classes. Static fusion strategies fail to adapt to this \emph{representational heterogeneity}, treating layers equally and often leading to feature dilution. To address this, we introduce an Uncertainty-Aware Supervised Attention mechanism. This module dynamically re-weighs each layer by evaluating its representation quality, thereby mitigating feature dilution.

\subsubsection{Dynamic Importance Estimation}
Our fusion strategy assigns an importance weight $w_i$ to each layer based on two properties:

\begin{enumerate}[leftmargin=*]
\item \textbf{Intrinsic Feature Salience (Projection Score):} We estimate structural significance using a gating network $h(\cdot)$ (MLP with Tanh activation) mapping the embedding to a scalar score $[-1, 1]$:
\begin{equation}
s_i^{\text{proj}} = h(\tanh(W_f Z_i)),
\end{equation}
where $W_f$ is a learnable projection matrix.

\item \textbf{Predictive Reliability (Uncertainty Score):} A representation is valuable only if it yields decisive classification. We quantify this via the entropy $H(Y \mid Z_i)$ of the prediction distribution. Thus, the uncertainty score is defined as:
\begin{equation}
s_i^{\text{unc}} = - H(Y \mid Z_i) = \sum_{c=1}^{C} p(y_{ic} \mid Z_i) \log p(y_{ic} \mid Z_i).
\end{equation}
\end{enumerate}

The final attention score is $S_i = \lambda_1 s_i^{\text{proj}} + \lambda_2 s_i^{\text{unc}}$, where $\lambda_1, \lambda_2$ are balancing coefficients. The normalized weights $w_i \in \mathbb{R}^M$ are obtained via a temperature-scaled Softmax:
\begin{equation}
w_i = \frac{\exp(S_i / \tau)}{\sum_{j=1}^{M} \exp(S_j / \tau)},
\end{equation}
where $\tau$ controls distribution sharpness. The fused representation is constructed by concatenating the weighted embeddings: $Z_{\text{fused}} = [w_1 Z_,\dots, w_M Z_M]$. This ensures the decision relies on protocol layers with the strongest representational clarity for the specific sample.

\subsubsection{Overall Training Objective}
Effective uncertainty-aware fusion relies on calibrated layer representations. For $s_i^{\text{unc}}$ to reflect quality rather than noise, each layer classifier requires explicit supervision.

We construct a holistic objective integrating intrinsic compression ($\mathcal{L}_{\text{rec}}^i$), shared consensus ($\mathcal{L}_{\text{con}}$), and discriminative power ($\mathcal{L}_{\text{ce}}$ and global $\mathcal{L}_{\text{ce}}^{\mathrm{G}}$):
\begin{equation}
\mathcal{L}_{\mathrm{total}} =
\frac{1}{M} \sum_{i=1}^{M} \mathcal{L}_{\mathrm{rec}}^i
+ \mathcal{L}_{\mathrm{con}}
+ \mathcal{L}_{\mathrm{ce}}
+ \mathcal{L}_{\mathrm{ce}}^{\mathrm{G}}
\label{eq:total-loss}
\end{equation}
Here, $\mathcal{L}_{\text{ce}}^{\mathrm{G}}$ is the global cross-entropy loss on the fused representation $Z_{fused}$. To mitigate class imbalance \cite{cui2019classbalancedloss}, we apply class re-weighting $\lambda_{c} = \frac{1 - \beta}{1 - \beta^{n_{c}}}$ ($n_{c}$ is the sample count). As $\beta \to 1$, this converges to inverse class frequency weights, intensifying supervision on minority classes. Joint optimization creates a feedback loop: layer-wise supervision calibrates uncertainty estimates, while the fusion module provides a robust ensemble for final classification.

%% file: sections/experiment.tex
\section{Experiments}

\begin{table*}[h]
\centering
\renewcommand{\arraystretch}{0.6} % Used to adjust height
\scriptsize % footnotesize?
\captionsetup{skip=4pt}
\caption{Classification Results Across Various Domain Traffic Datasets. }
\label{tab:vertical_results}

\resizebox{0.9\textwidth}{!}{

\setlength{\aboverulesep}{0.5pt}

\begin{tabular}{ll|cc|cc|cc|c}
\toprule

\multirow{2}{*}{\textbf{Dataset}} & \multirow{2}{*}{\textbf{Metric}} 
& \multicolumn{2}{c|}{\textbf{Flow Statistics}} 
& \multicolumn{2}{c|}{\textbf{Raw Packets}} 
& \multicolumn{2}{c|}{\textbf{Pretrained Embedding}} 

& \multirow{2}{*}[-1ex]{\textbf{\sysname}} \\

\cmidrule(lr){3-4} \cmidrule(lr){5-6} \cmidrule(lr){7-8}

& & \textbf{AppScanner} & \textbf{FlowPrint} & \textbf{TFE-GNN} & \textbf{nPrint} & \textbf{YaTC} & \textbf{NetMamba} & \\ \midrule

\multicolumn{9}{l}{\textit{\textbf{Task: Encrypted Application Classification}}} \\ \midrule

\multirow{4}{*}{CipherSpectrum-1} 
  & ACC & 0.1126 & 0.1238 & {0.5705} & 0.6455 & 0.6113 & 0.4659 & \textbf{0.7155} \\
  & PRE & 0.1145 & 0.1250 & {0.7043} & 0.6290 & \underline{0.7066} & 0.4932 & \textbf{0.7211} \\
  & REC & 0.1126 & 0.1238 & {0.5705} & 0.6298 & 0.6114 & 0.4659 & \textbf{0.7067} \\
  & F1  & 0.1135 & 0.1244 & {0.5830} & 0.5926 & 0.6193 & 0.4748 & \textbf{0.6934} \\ \cmidrule{1-9}

\multirow{4}{*}{CipherSpectrum-2} 
  & ACC & 0.1023 & 0.1148 & {0.4475} & 0.5590 & \underline{0.5750} & 0.4975 & \textbf{0.6602} \\
  & PRE & 0.1103 & 0.1271 & {0.4524} & 0.5533 & \underline{0.6076} & 0.5278 & \textbf{0.6400} \\
  & REC & 0.1023 & 0.1148 & {0.4475} & 0.5348 & \underline{0.5589} & 0.4975 & \textbf{0.5973} \\
  & F1  & 0.1061 & 0.1206 & {0.4366} & 0.4984 & \textbf{0.5823} & 0.4937 & \underline{0.5625} \\ \cmidrule{1-9}

\multirow{4}{*}{CipherSpectrum-3} 
  & ACC & 0.1422 & 0.1544 & {0.6830} & \underline{0.7653} & 0.7550 & 0.6025 & \textbf{0.8575} \\
  & PRE & 0.1365 & 0.1479 & {0.6915} & \underline{0.7830} & 0.7749 & 0.6097 & \textbf{0.8673} \\
  & REC & 0.1422 & 0.1544 & {0.6830} & \underline{0.7682} & 0.7550 & 0.6025 & \textbf{0.8575} \\
  & F1  & 0.1393 & 0.1511 & {0.6872} & \underline{0.7656} & 0.7569 & 0.6013 & \textbf{0.8577} \\ \cmidrule{1-9}

\multirow{4}{*}{CipherSpectrum-4} 
  & ACC & 0.1132 & 0.1248 & {0.7019} & 0.6488 & 0.7175 & \underline{0.7351} & \textbf{0.7829} \\
  & PRE & 0.1325 & 0.1446 & {0.7152} & 0.5138 & {0.7309} & \underline{0.7604} & \textbf{0.7657} \\
  & REC & 0.1132 & 0.1248 & {0.7019} & 0.6399 & 0.7175 & \underline{0.7351} & \textbf{0.7761} \\
  & F1  & 0.1221 & 0.1340 & {0.7084} & 0.5492 & \textbf{0.7177} & 0.6964 & \underline{0.7099} \\ \midrule

\multicolumn{9}{l}{\textit{\textbf{Task: IoT Device Classification}}} \\ \midrule

\multirow{4}{*}{UNSW 2018} 
  & ACC & 0.4726 & 0.5179 & {0.9612} & 0.9785 & 0.9827 & \underline{0.9849} & \textbf{0.9938} \\
  & PRE & 0.2359 & 0.2682 & {0.6210} & 0.7701 & \underline{0.9835} & 0.9828 & \textbf{0.9875} \\
  & REC & 0.4721 & 0.5176 & {0.6548} & 0.7390 & 0.9741 & \textbf{0.9849} & \underline{0.9795} \\
  & F1  & 0.3147 & 0.3534 & {0.6321} & 0.7516 & 0.9783 & \underline{0.9827} & \textbf{0.9831} \\ \midrule

\multicolumn{9}{l}{\textit{\textbf{Task: Attack Traffic Identification}}} \\ \midrule

\multirow{4}{*}{CICIOT 2023} 
  & ACC & 0.9059 & 0.9715 & {0.9558} & 0.9475 & \underline{0.9819} & 0.9770 & \textbf{0.9869} \\
  & PRE & \underline{0.9914} & 0.9666 & {0.9629} & 0.9005 & 0.9774 & 0.9769 & \textbf{0.9921} \\
  & REC & 0.9059 & \underline{0.9763} & {0.9303} & 0.9285 & 0.9569 & \textbf{0.9770} & 0.9633 \\
  & F1  & 0.9451 & 0.9706 & {0.9464} & 0.9137 & 0.9682 & \underline{0.9769} & \textbf{0.9770} \\ \bottomrule

\end{tabular}
}
\end{table*}

\subsection{Experimental Setups}

\textbf{Datasets.} To verify the effectiveness of our proposed model, we conduct experiments on six open-access datasets derived from actual traffic scenarios, addressing three primary classification objectives.
\textit{Encrypted Application Classification.} This task aims to classify application traffic under encryption protocols. We concentrate on the recently proposed high-strength encrypted network traffic data, CipherSpectrum \cite{11023502}.~\footnote{Our experiments utilize 4 disjoint subsets generated by extracting samples from each category from the full 41-class, mirroring the data partition method described in \cite{11023502}}.
\textit{IoT Device Classification.} To address the task of distinguishing traffic from various IoT devices, we utilized the UNSW 2018 dataset \cite{18tmc}. Specifically, we extracted traffic traces from the first four days, covering 16 distinct classes.
\textit{Attack Traffic Classification.} Aiming to detect malicious traffic, this study employs the CICIoT 2023 dataset \cite{s23135941} for binary classification, distinguishing between benign and attack instances. \textbf{Baselines} include traditional flow statistics, raw packet embeddings, and pre-training paradigms. \textbf{Implementation Details} comprise data preparation, feature engineering, model architecture and training configuration. The introduction
of baselines, implementation details and \textbf{Metrics} can be found in Appendix \ref{app:exp}.

\subsection{End-to-end Results}

Table \ref{tab:vertical_results} presents a comprehensive performance comparison across six benchmark subsets, evaluating \sysname against six representative baselines. The results demonstrate the consistent superiority of our proposed framework, particularly in challenging encrypted traffic scenarios where strict artifact masking is applied.

\subsubsection{Results Analysis.}
Performance gains on the \textit{CipherSpectrum} subsets highlight how strict masking protocols expose specific limitations across existing baselines. While both flow statistics methods suffer from an information bottleneck that smooths out fine-grained cross-layer agreement and rare but discriminative events, \textit{FlowPrint} further losing effectiveness as the statistical shortcuts it relies on are eliminated. Conversely, whether relying on redundant bit representations or sparse graph structures, raw-bit/byte encoding approaches like \textit{nPrint} and \textit{TFE-GNN} are hampered by syntactic redundancy and session artifacts, which inflate computational overhead and risk learning incidental correlations. Crucially, even sophisticated pre-trained models like \textit{NetMamba} and \textit{YaTC} experience performance degradation. Their efficacy is constrained not only by the {domain shift} between upstream pre-training distributions and the randomized target domain but also by their structural reliance on specific artifacts (e.g., ports) that lose discriminative value under masking. Furthermore, treating diverse protocols as a flat sequence inevitably assimilates \emph{task-irrelevant noise}. In contrast, \sysname consistently outperforms these heavy models without relying on expensive pre-training. By explicitly modeling cross-layer \emph{shared consensus}, it functions as a semantic filter that purges noise to capture intrinsic evidence, ensuring robustness even when transient artifacts are randomized.

\subsubsection{Redundancy Analysis}

Complementing to Fig. \ref{fig:two_fig}, we visualize the layer-wise perspective on mutual information and compression of \sysname embeddings in Fig. \ref{fig:two_fig1}. As shown in Fig. \ref{fig:emb_cmi}, \sysname reshapes the consensus pattern: shared evidence becomes more concentrated among Network/Transport/Application (up to $\approx0.35$--$0.38$), while the Link layer becomes less entangled with higher layers ($\approx0.15$--$0.22$). This aligns with the goal of strengthening consensus where it is most task-relevant, while preserving space for layer-specific cues. Task relevance follows a similar trend. Fig.~\ref{fig:emb Task relevance} visualizes the representational transformation achieved by \sysname. As indicated by the dashed arrows, all layers undergo a dramatic shift from the high-redundancy, low-relevance region (bottom-right) toward the ideal high-information, low-redundancy zone (top-left). While raw layers are scattered across a wide range of inefficiency (compressing $11\times$--$33\times$), \sysname unifies them into a highly compact latent space (ratio $\approx 2\times$) with significantly amplified task relevance ($I(X;Y)$), effectively distilling sparse signals from noisy protocols.

\begin{figure}[t]
    % \vspace{-0.3cm}
    \centering
    \begin{subfigure}[c]{0.48\linewidth}
        \includegraphics[width=\linewidth]{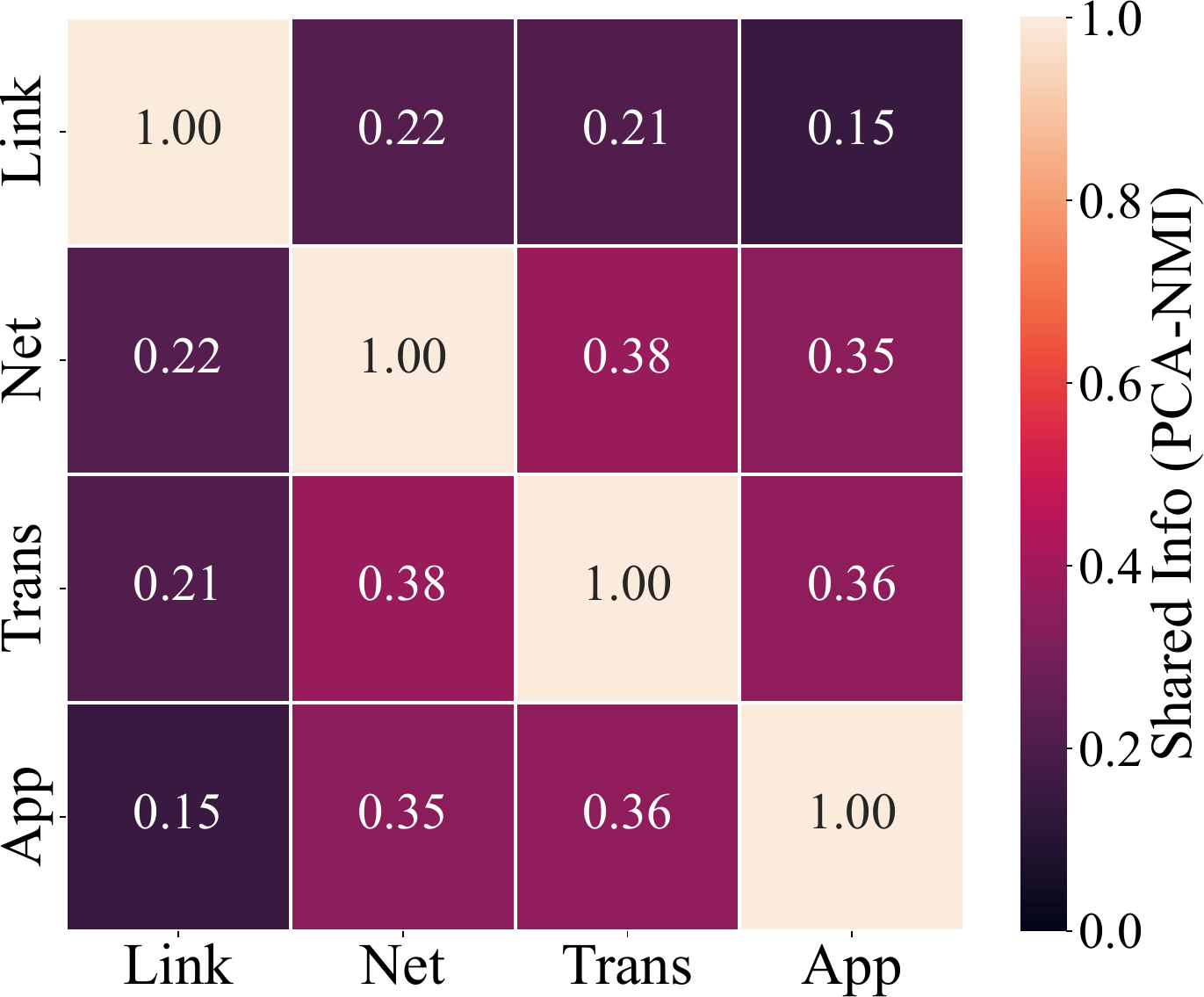}
        \caption{Task-relevant consensus}
        \label{fig:emb_cmi}
    \end{subfigure}
    \hfill
    \begin{subfigure}[c]{0.48\linewidth}
        \includegraphics[width=\linewidth]{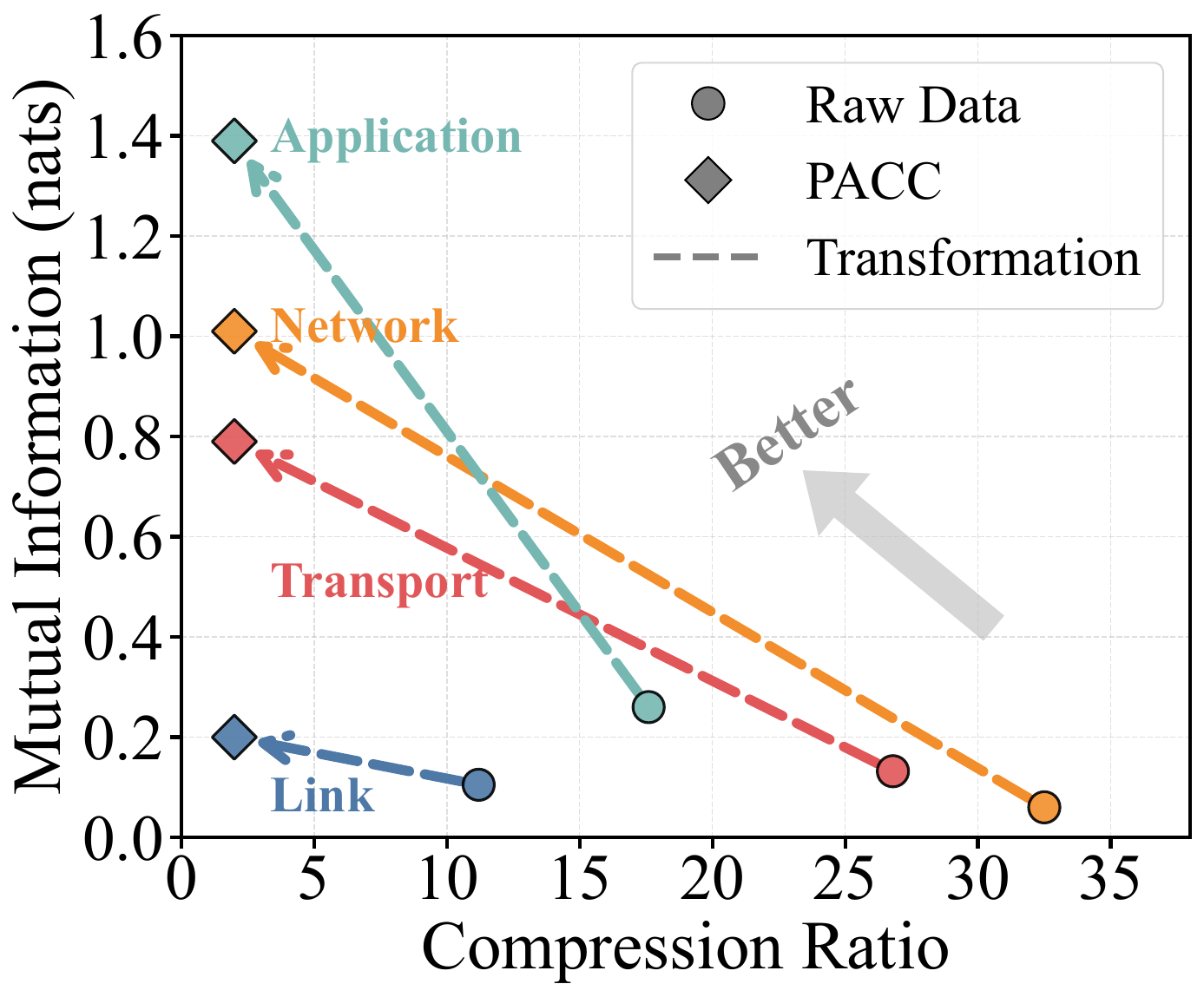}
        \caption{Removed redundancies }
        \label{fig:emb Task relevance}
    \end{subfigure}

    \caption{Layer-wise perspective on NMI and compression of \sysname embeddings. (a) Cross-layer PCA-normalized conditional mutual information $I(X_i; X_j | Y)$. (b) Comparison of task relevance $I(X; Y)$ and redundancy (compression ratio) between raw features and learned representations.}

    \label{fig:two_fig1}
    \vspace{-0.5cm}
\end{figure}

% \sysname recovers high-fidelity classification boundaries while simultaneously mitigating the syntactic redundancy inherent in raw representations. This empirically proves that maximizing the mutual information between layer representations ($I(Z_i; Z_j)$) and labels ($I(Z_i; Y)$) successfully excavates \emph{task-relevant}, discriminative evidence from sparse and strictly masked inputs.

\subsection{Ablation Study}

We benchmark \sysname against the \textit{nPrint} baseline (optimized solely by global classification loss) to validate our hierarchical design. As shown in Table \ref{tab:ablation_acc}, sequentially removing objectives reveals their distinct theoretical roles: (i) \textit{w/o Task-Info} suffers the most severe degradation, confirming that maximizing $I(Z_i; Y)$ is not merely a classification objective but a dual-purpose semantic filter; it purges \emph{task-irrelevant noise} which might be shared but useless while simultaneously excavating the \emph{non-redundant} unique value of each layer, ensuring that both shared and layer-specific evidence contribute to the decision. (ii) \textit{w/o Consensus} validates that the contrastive alignment is essential for bridging the semantic gap between heterogeneous protocols; without it, the model fails to form a robust shared cross-layer anchor while capturing valuable semantic redundancy that serves as robust evidence for classification. (iii) \textit{w/o Reconstruction} confirms that $\mathcal{L}_{\mathrm{rec}}$ serves as a foundational regularizer, necessary to compress the \emph{syntactic redundancy} of sparse nPrint inputs, ensuring the faithful preservation of raw information volume and recovering intrinsic directional patterns before semantic processing. The performance drops observed in all ablated variants demonstrate that jointly optimizing these complementary objectives is requisite for achieving the performance upper bound.

\begin{table}[t]
\vspace{-0.1cm}
\centering
\captionsetup{skip=4pt}
\caption{Ablation Study of Optimization Objectives (Accuracy). Note: CS denotes CipherSpectrum.}
\label{tab:ablation_acc}
\renewcommand{\arraystretch}{1.0}
\setlength{\tabcolsep}{8pt}     
\resizebox{\linewidth}{!}{
\begin{tabular}{l|cccc}
\toprule
\textbf{Variant} & \textbf{CS-1} & \textbf{CS-2} & \textbf{CS-3} & \textbf{CS-4} \\ \midrule
w/ Classifier (nPrint) & 0.6455 & 0.5590 & 0.7653 & 0.6488 \\ \midrule
w/o Reconstruction     & 0.7012 & 0.6388 & 0.8402 & 0.7640 \\
w/o Consensus          & 0.6845 & 0.6358 & 0.8438 & 0.7492 \\
w/o Task-Info          & 0.6810 & 0.6364 & 0.8249 & 0.7331 \\
\midrule
\textbf{\sysname}      & \textbf{0.7155} & \textbf{0.6602} & \textbf{0.8575} & \textbf{0.7829} \\ \bottomrule
\end{tabular}
}
\vspace{-0.5cm}
\end{table}

\subsection{Efficiency and Performance Analysis}
\label{sec:efficiency}

To evaluate the trade-off, we compared our method with nPrint and two SOTA pre-training baselines on CipherSpectrum-4, maintaining identical experimental conditions (e.g., batch size, learning rate) (Fig.~\ref{fig:efficiency_comparison}). \emph{Regarding pre-stage overhead,} SOTA methods incur significant costs (e.g., YaTC $\approx$ 1.5h) under partial dataset evaluation, due to the ``pre-train and fine-tune'' paradigm requiring learning representations from massive datasets.
 In contrast, \sysname and nPrint avoid this by limiting pre-stage costs strictly to data pre-processing, reducing initial overhead to one-fifth. \emph{In terms of training dynamics,} while pre-trained SOTA models fine-tune quickly due to prior knowledge, \sysname surprisingly achieves higher training efficiency (shorter $T_{train}$) than the simpler nPrint baseline. We attribute this to superior feature consensus modeling, which facilitates faster convergence to the optimal solution compared to the baseline. \emph{Overall,} \sysname achieves the highest accuracy of 78.29\% with a total time of only 0.57 hours, outperforming SOTA methods in accuracy while maintaining the efficiency.

\begin{figure}[h]
    \centering
    \includegraphics[width=0.85\linewidth, height=4.2cm]{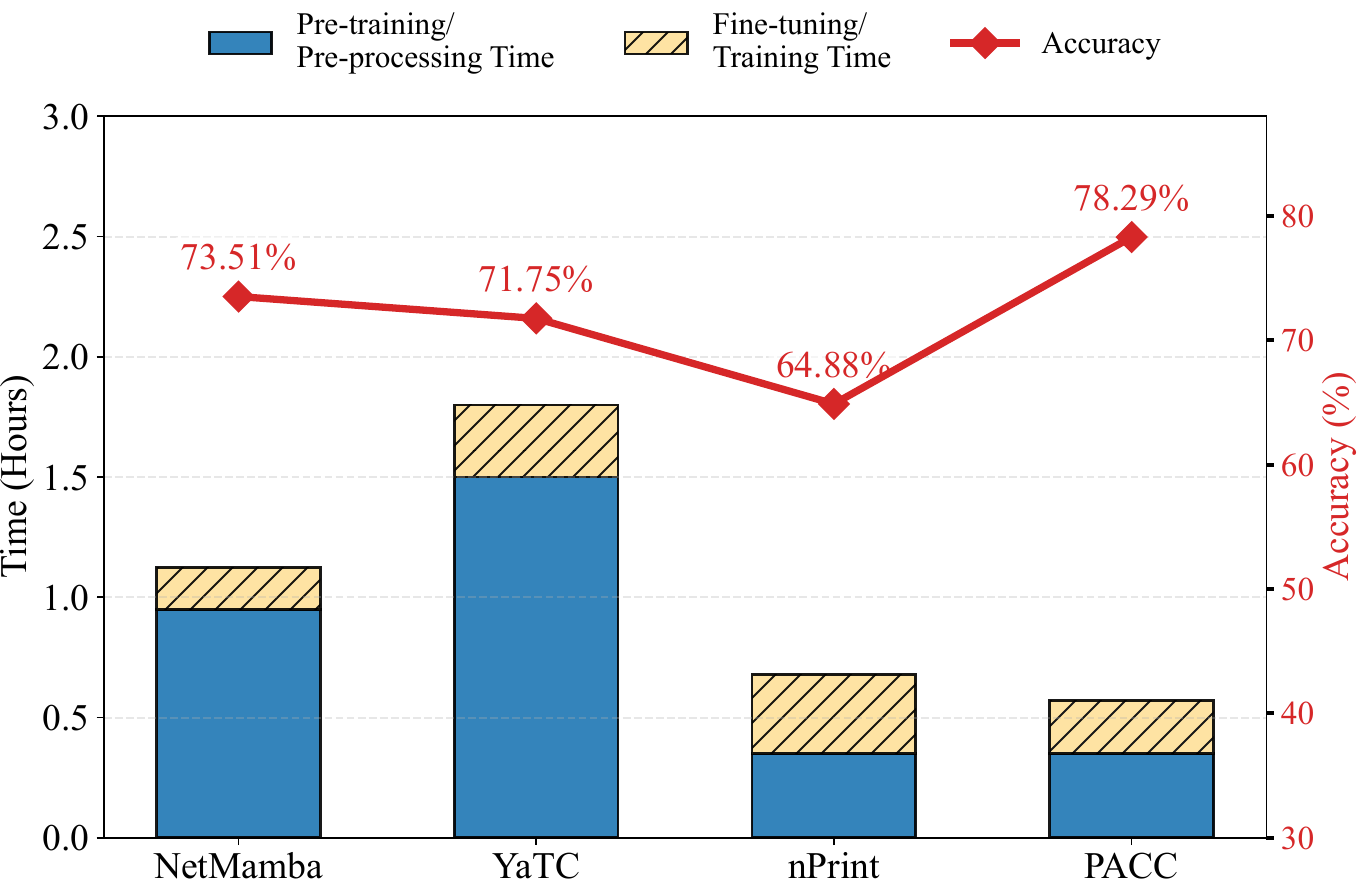}
    \caption{Comparison of computational efficiency and performance. The blue bars indicate pre-training time (for SOTA models) or data pre-processing time (for nPrint and \sysname). The yellow hatched bars represent the fine-tuning or training time required to reach peak accuracy. The red line traces the highest accuracy achieved by each method.}
    \label{fig:efficiency_comparison}
    \vspace{-0.3cm}
\end{figure}

\subsection{Sensitivity Analysis}

\subsubsection{Impact of Representation Dimension.}
To investigate the influence of the final representation dimension (i.e., the input dimension of the classifier), We evaluated dimensions from 64 to 1024 (Fig.~\ref{fig:dim}). The model exhibits robustness even at lower dimensions, as the reconstruction loss compels the encoder to preserve intrinsic semantics within compact representations. Conversely, excessive dimensions (e.g., 1024) cause performance decline. We attribute this to the re-introduction of sparsity and redundancy, which increases the risk of overfitting and hinders convergence. Therefore, 256 is selected to balance expressiveness and efficiency.

\subsubsection{Impact of Class-balancing Factor $\beta$.}
Fig.~\ref{fig:beta} evaluates the introduced class-balance parameter $\beta$ on the UNSW 2018 dataset. At the baseline $\beta=0$, high accuracy is deceptive due to bias toward majority classes, as revealed by low macro-averaged scores. Performance improves with $\beta$, peaking at 0.99. However, pushing $\beta$ towards 1.0 (e.g., 0.9999) degrades performance; this regime approximates inverse class frequency weights, causing the model to overfit minority outliers and distort decision boundaries. Thus, we fix $\beta=0.99$ to achieve a balance between tail supervision and stability.

\label{sec:sensitivity}

\begin{figure}[t]
    \centering
    \begin{minipage}{0.48\linewidth}
        \centering
        \includegraphics[width=\linewidth, height=3.cm]{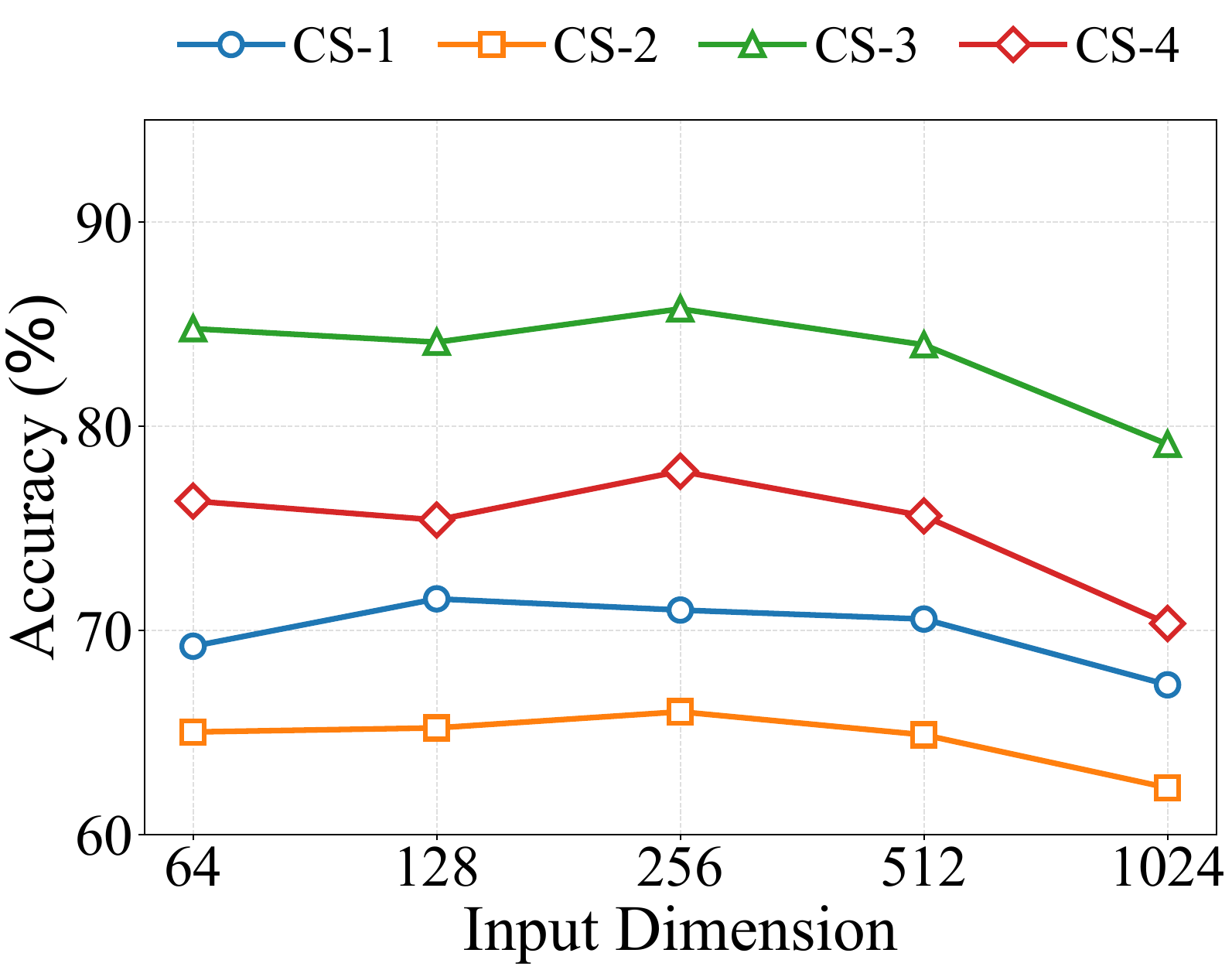}
        \caption{Impact of the representation dimension.}
        \label{fig:dim}
    \end{minipage}
    \hfill
    \begin{minipage}{0.48\linewidth}
        \centering
        \includegraphics[width=\linewidth, height=3.cm]{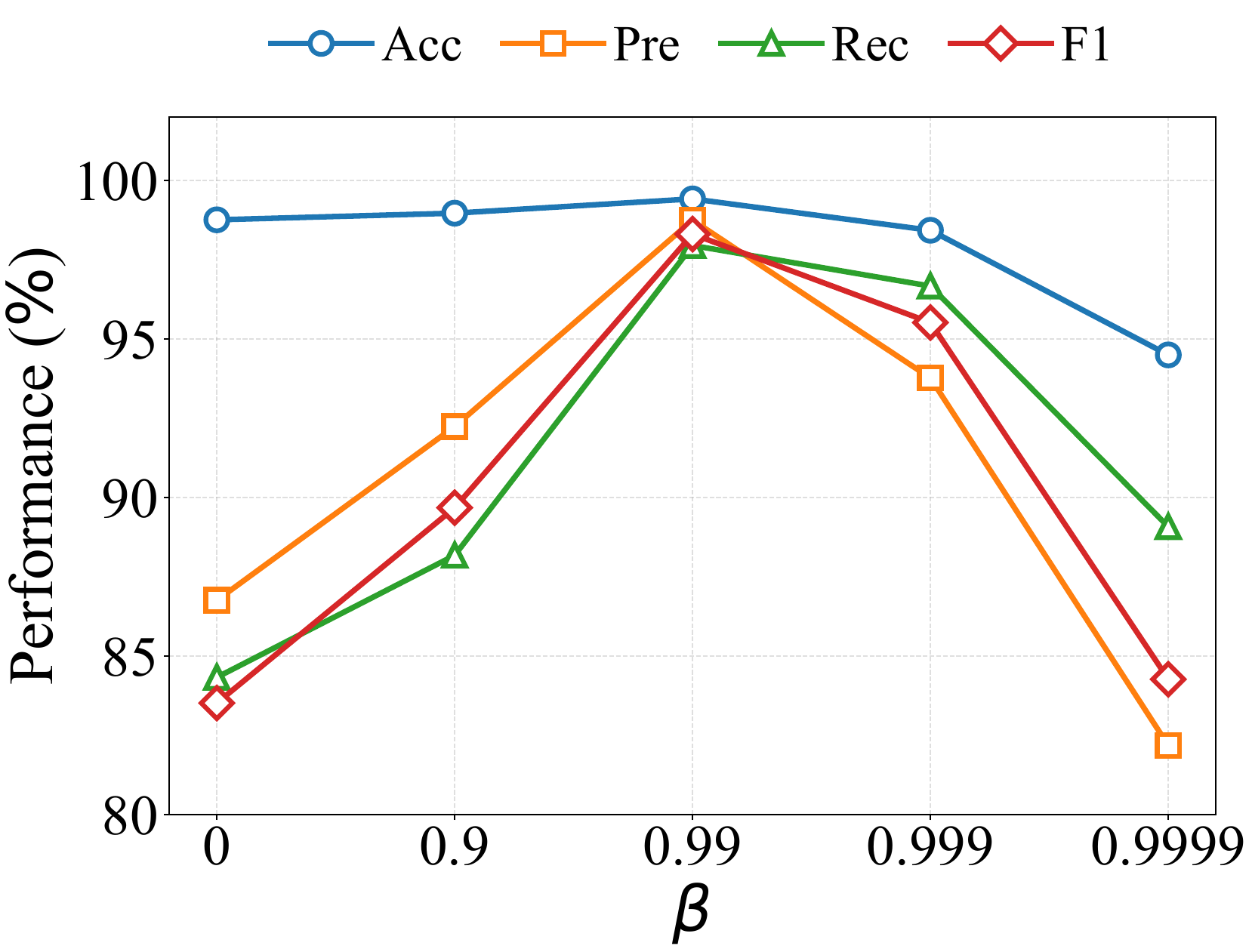}
        \caption{Impact of the class-balancing factor $\beta$.}
        \label{fig:beta}
    \end{minipage}
    \vspace{-0.3cm}
\end{figure}

%% file: sections/conclusion.tex
\section{Conclusion and Future Work}

We introduce \sysname, a redundancy-aware cross-layer representation framework that addresses the performance-efficiency bottleneck in encrypted traffic classification. By jointly optimizing layer-wise reconstruction, cross-layer consensus alignment, and task-relevant information, \sysname distills compact, discriminative embeddings from noisy raw bitstreams. Extensive experiments on multiple real-world datasets show that our approach outperforms feature-engineered and raw-bit baselines, achieving accuracy comparable to large pre-trained models while reducing training and inference time. A natural next step to extend the framework to session- and multi-flow settings by modeling temporal dependencies and cross-flow interactions and associated redundancy (e.g., burst structure across flows, host-level aggregation, and graph context). Another important direction is to harden \sysname against adversarial and privacy-sensitive environments. For example, the robustness to evasion strategies (e.g., as padding and header manipulation) could be investigated, which could better inform the training objectives that enforce invariances 
without sacrificing discriminative power.

%% file: sections/appendix.tex
\newpage
\newpage
\section{Theoretical Analysis}

\subsection{Proof of Proposition \ref{pro:rec}}
\label{app:pro_rec}

\propRec*

% --- Lemma 1: Variational Lower Bound ---
\begin{myLem}[Variational Lower Bound]
\label{lemma:vib}
For any two random variables $U, V$ such that the marginal entropy $H(U)$ is finite, the mutual information $I(U; V)$ is bounded from below by the expected log-likelihood of a variational approximation $q_\theta(U \mid V)$ to the true posterior $p(U \mid V)$:
\begin{equation}
    I(U; V) \ge H(U) + \mathbb{E}_{p(u, v)} [\log q_\theta(U \mid V)].
\end{equation}
\end{myLem}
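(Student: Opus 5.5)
The plan is to write the mutual information through conditional entropy and then bring in the variational distribution via a nonnegative Kullback--Leibler gap. We start from the identity $I(U;V) = H(U) - H(U\mid V)$, which is well defined because $H(U)$ is finite by assumption. Since $I(U;V)\ge 0$, this forces $H(U\mid V)\le H(U)<\infty$. Next we write the conditional entropy as an expected negative log-posterior, $H(U\mid V) = -\mathbb{E}_{p(u,v)}[\log p(U\mid V)]$, and insert the variational family $q_\theta(U\mid V)$ by adding and subtracting $\log q_\theta(U\mid V)$ inside the expectation.

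Doing so splits the expectation into two pieces:
\begin{equation*}
\mathbb{E}_{p(u,v)}[\log p(U\mid V)] = \mathbb{E}_{p(u,v)}[\log q_\theta(U\mid V)] + \mathbb{E}_{p(v)}\bigl[\mathrm{KL}\bigl(p(\cdot\mid V)\,\|\,q_\theta(\cdot\mid V)\bigr)\bigr].
\end{equation*}
Gibbs' inequality makes the KL term nonnegative pointwise in $v$, so its expectation is nonnegative. Substituting back gives $I(U;V) = H(U) + \mathbb{E}[\log q_\theta(U\mid V)] + \mathbb{E}_{p(v)}[\mathrm{KL}] \ge H(U) + \mathbb{E}[\log q_\theta(U\mid V)]$. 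Equality holds exactly when $q_\theta(\cdot\mid v)=p(\cdot\mid v)$ for $p(v)$-almost every $v$, and the bound is trivially valid if the right-hand side is $-\infty$.

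The main obstacle is measure-theoretic bookkeeping rather than algebra. The add-and-subtract step must not produce an $\infty-\infty$ expression, which requires $\mathbb{E}[\log p(U\mid V)]$ to be finite. That finiteness comes from $H(U\mid V)\le H(U)<\infty$ in the discrete case, or from finiteness of differential entropies in the continuous case. I would handle this by first treating the case where $\mathbb{E}[\log q_\theta]$ is finite, in which the decomposition is legitimate, and noting that the inequality is vacuous otherwise. We also need $q_\theta(\cdot\mid v)$ to be a normalized density with respect to the same base measure as $p(\cdot\mid v)$, so that the KL divergence is well defined. This is a standing assumption on the variational family.
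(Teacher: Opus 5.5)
Your proposal is correct and follows essentially the same route as the paper: write $I(U;V)=H(U)-H(U\mid V)$ and use nonnegativity of $D_{\mathrm{KL}}\bigl(p(\cdot\mid v)\,\|\,q_\theta(\cdot\mid v)\bigr)$ to obtain $\mathbb{E}[\log p(U\mid V)]\ge\mathbb{E}[\log q_\theta(U\mid V)]$. Your version also makes the gap explicit as $\mathbb{E}_{p(v)}[\mathrm{KL}]$ and adds finiteness and common-base-measure bookkeeping; the base-measure condition is genuinely needed, and the paper leaves it implicit.
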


\begin{proof}
The mutual information is defined as $I(U; V) = H(U) - H(U \mid V)$. Utilizing the non-negativity of the Kullback-Leibler divergence $D_{\text{KL}}(p(u \mid v) \| q_\theta(u \mid v)) \ge 0$, we have:
\begin{equation}
    -H(U \mid V) = \mathbb{E}_{p(u, v)} [\log p(u \mid v)] \ge \mathbb{E}_{p(u, v)} [\log q_\theta(u \mid v)].
\end{equation}
Substituting this inequality back into the definition of mutual information yields the lower bound.
\end{proof}

% --- Lemma 2: vMF Equivalence ---
\begin{myLem}[vMF Likelihood and Cosine Equivalence]
\label{lemma:vmf_cos}
Let $x, \mu \in \mathbb{S}^{d-1}$ be unit vectors such that $\|x\| = \|\mu\| = 1$. If a random variable $x$ follows a von Mises-Fisher distribution with mean direction $\mu$ and concentration $\kappa > 0$, i.e., $x \sim \text{vMF}(\mu, \kappa)$, then maximizing the log-likelihood of $x$ given $\mu$ is equivalent to minimizing the cosine distance:
\begin{equation}
    \operatorname*{argmax}_{\mu} \log P_{\text{vMF}}(x \mid \mu, \kappa) \iff \operatorname*{argmin}_{\mu} \left( 1 - \mu^\top x \right).
\end{equation}
\end{myLem}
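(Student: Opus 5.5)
We need to prove Lemma~\ref{lemma:vmf_cos}. The plan is to write out the von Mises--Fisher density explicitly, take its logarithm, and observe that the only term depending on $\mu$ is linear in the inner product $\mu^\top x$. Concretely, for $x \in \mathbb{S}^{d-1}$ the vMF density is
\begin{equation*}
P_{\text{vMF}}(x \mid \mu, \kappa) = C_d(\kappa)\exp\left(\kappa\, \mu^\top x\right), \qquad C_d(\kappa) = \frac{\kappa^{d/2-1}}{(2\pi)^{d/2} I_{d/2-1}(\kappa)},
\end{equation*}
where $I_{\nu}$ is the modified Bessel function of the first kind. The first step is to note that the normalizing constant $C_d(\kappa)$ depends only on $\kappa$ and $d$, not on $\mu$. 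This holds because the integral $\int_{\mathbb{S}^{d-1}} \exp(\kappa \mu^\top x)\,dx$ is invariant under rotations of $\mu$, since the uniform surface measure is rotation invariant.

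Taking logarithms gives
\begin{equation*}
\log P_{\text{vMF}}(x \mid \mu, \kappa) = \log C_d(\kappa) + \kappa\, \mu^\top x = \bigl(\log C_d(\kappa) + \kappa\bigr) - \kappa\,\bigl(1 - \mu^\top x\bigr).
\end{equation*}
With $\kappa > 0$ fixed, this is a strictly decreasing affine function of the cosine distance $1 - \mu^\top x$. Such a transformation preserves the set of optimizers, so over any feasible set of unit vectors $\mu$ the maximizers of the log-likelihood coincide exactly with the minimizers of $1-\mu^\top x$. This gives the stated equivalence. For completeness, over the whole sphere the common optimizer is $\mu = x$ by Cauchy--Schwarz, with loss $0$.

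The main subtlety, rather than an obstacle, is connecting the unit-norm hypothesis to how the lemma is used in Eq.~(\ref{recloss}), where $X_i^k$ and $\hat{X}_i^k$ are not normalized. I would note that we apply the lemma to $x = X_i^k/\|X_i^k\|$ and $\mu = \hat{X}_i^k/\|\hat{X}_i^k\|$. Then $\mu^\top x$ is exactly the cosine similarity term, and summing the per-sample identity over $k$ shows that the average vMF negative log-likelihood equals $\kappa\,\mathcal{L}^i_{\text{rec}}$ plus a constant. I would also remark that the all-zero vector is excluded, as required for the cosine to be defined. The one point to check carefully is that $\kappa$ is held fixed and is not optimized jointly with $\mu$; otherwise $C_d(\kappa)$ would enter the objective and the equivalence would only hold at each fixed $\kappa$.
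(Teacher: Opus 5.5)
Your proof is correct and follows the paper's argument: write the log-density as $\log C_d(\kappa) + \kappa\,\mu^\top x$, observe that $C_d(\kappa)$ and $\kappa>0$ do not depend on $\mu$, and conclude that maximizing the likelihood is equivalent to minimizing $1-\mu^\top x$. Your extra details are sound refinements of points the paper leaves implicit: the rotation-invariance argument for why the normalizer is independent of $\mu$, the explicit affine rewriting, and the caveat that $\kappa$ must be held fixed.
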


\begin{proof}
The vMF probability density function is $f(x; \mu, \kappa) = C_d(\kappa) \exp(\kappa \mu^\top x)$. The log-likelihood is $\mathcal{L} = \log C_d(\kappa) + \kappa \mu^\top x$. Since $\kappa > 0$ and $C_d(\kappa)$ are constants with respect to $\mu$, maximizing $\mathcal{L}$ corresponds to maximizing $\mu^\top x$. For unit vectors, $\mu^\top x = \cos\theta$; thus, maximizing similarity is equivalent to minimizing the distance $1 - \mu^\top x$.
\end{proof}

% --- Main Proof ---
\begin{proof}
We now proceed to the main proof. Our objective is to maximize the mutual information $I(X_i; Z_i)$ between the raw features $X_i \in \mathbb{R}^{d_f}$ and the latent representation $Z_i$. Since the nPrint features are discrete and finite, the marginal entropy $H(X_i)$ is finite. Applying Lemma \ref{lemma:vib} with $U=X_i$ and $V=Z_i$, maximizing $I(X_i; Z_i)$ is equivalent to maximizing the expected log-likelihood of the variational decoder $q_\phi(X_i \mid Z_i)$:
\begin{equation}
    \max_{\phi} \mathcal{J} = \max_{\phi} \mathbb{E}_{p(x, z)} [\log q_\phi(X_i \mid Z_i)].
\end{equation}

To capture the directional semantics of the traffic features, we explicitly account for the geometric nature of nPrint embeddings. Specifically, since the semantic identity of these binary vectors is encoded strictly in the \emph{spatial pattern} of active bits (i.e., the vector orientation) rather than their magnitude, we define the normalized input as $x = X_i / \|X_i\|$ and the mean direction parameter as $\mu = \hat{X}_i / \|\hat{X}_i\|$, where $\hat{X}_i = g_\phi(Z_i)$ is the output of the up-projecting decoder. Crucially, the normalization operations ensure that both $x$ and $\mu$ satisfy the unit norm constraint ($||\mu||=1$) required by Lemma \ref{lemma:vmf_cos}.

By invoking Lemma \ref{lemma:vmf_cos}, the maximization of the expected vMF log-likelihood transforms directly into the minimization of the expected cosine distance. Consequently, we can rewrite the optimization objective by substituting the normalized vectors back into the distance formulation:
\begin{equation}
    \max_{\phi} \mathcal{J} \iff \min_{\phi} \mathbb{E} \left[ 1 - \mu^\top x \right] = \min_{\phi} \mathbb{E} \left[ 1 - \frac{{X}_{i}^\top \hat{X}_i}{\| X_i \| \| \hat{X}_i \|} \right].
\end{equation}
This final expression corresponds exactly to the formulation of our reconstruction loss $\mathcal{L}_{\text{rec}}$ in Eq. (\ref{recloss}), completing the proof.
\end{proof}

\subsection{Proof of Theorem \ref{The1}}
\label{app:the_con}

\thmShared*

\begin{proof}
Let $Z_i=(Z_i^{s},Z_i^{u})$ and $Z_j=(Z_j^{s},Z_j^{u})$. By the chain rule for mutual information, we expand $I(Z_i; Z_j)$ as:
\begin{equation}
\begin{aligned}
I(Z_i; Z_j)
&= I\big((Z_i^{s},Z_i^{u});Z_j\big) \\
&= I(Z_i^{s}; Z_j) + I(Z_i^{u}; Z_j \mid Z_i^{s}).
\end{aligned}
\label{eq:mi-expansion-cond}
\end{equation}

By Definition \ref{Definition2} (Conditional Latent Decomposition Assumption), the private component is conditionally independent of other layer representations given the shared component:
\[
Z_i^{u} \perp Z_j \mid Z_i^{s} \quad \Rightarrow \quad I(Z_i^{u}; Z_j \mid Z_i^{s}) = 0.
\]

Substituting this into \eqref{eq:mi-expansion-cond} yields:
\begin{equation}
I(Z_i; Z_j) = I(Z_i^{s}; Z_j).
\label{eq:mi-equality-cond}
\end{equation}

Since the symmetric assumption also holds for $Z_j$, i.e., $Z_j^{u} \perp Z_i \mid Z_j^{s}$, we can further apply the chain rule:

\begin{equation}
\begin{aligned}
I(Z_i^{s}; Z_j)
&= I\big(Z_i^{s};(Z_j^{s},Z_j^{u})\big) \\
&= I(Z_i^{s}; Z_j^{s}) + I(Z_i^{s}; Z_j^{u} \mid Z_j^{s}).
\end{aligned}
\label{eq:mi-expansion-cond}
\end{equation}
Using the symmetric conditional independence:
\[
Z_j^{u} \perp Z_i \mid Z_j^{s} \quad \Rightarrow Z_j^{u} \perp Z_i^{s} \mid Z_j^{s} \Rightarrow \quad I(Z_i^{s}; Z_j^{u} \mid Z_j^{s}) = 0,
\]

we finally obtain:
\begin{equation}
I(Z_i; Z_j) = I(Z_i^{s}; Z_j^{s}).
\end{equation}
\end{proof}

\subsection{Proof of Proposition \ref{prop:holistic_decomposition}}
\label{app:prop2}
\propHolistic*

\begin{proof}
We first address the tractability. Direct estimation of $I(Z_i; Y)$ is intractable. By introducing a variational distribution $q_\theta(Y \mid Z_i)$ to approximate the true posterior $p(Y \mid Z_i)$ (see Lemma \ref{lemma:vib}), and utilizing the entropy decomposition $I(Z_i; Y) = H(Y) - H(Y \mid Z_i)$, we derive the lower bound:
\begin{equation}
\begin{aligned}
I(Z_i; Y) &= H(Y) - H(Y \mid Z_i) \\
&\ge H(Y) + \mathbb{E}_{p(Z_i, Y)}[\log q_\theta(Y \mid Z_i)].
\end{aligned}
\end{equation}
Since $H(Y)$ is constant, maximizing the lower bound is equivalent to minimizing the negative log-likelihood, yielding the cross-entropy loss $\mathcal{L}_{\mathrm{ce}}$.

Next, we derive the inter-layer decomposition (Eq. \ref{eq:inter_decomp}). Using the definition of conditional mutual information $I(Z_i; Y \mid Z_{-i}) = H(Y \mid Z_{-i}) - H(Y \mid Z_i, Z_{-i})$, we expand the interaction term:
\begin{equation}
\begin{aligned}
I(Z_i; &Y) - I(Z_i; Y \mid Z_{-i}) \\
&= H(Y) - H(Y \mid Z_i) - [H(Y \mid Z_{-i}) - H(Y \mid Z_i, Z_{-i})] \\
&= I(Z_i; Z_{-i}; Y).
\end{aligned}
\end{equation}

Finally, for the intra-layer decomposition (Eq. \ref{eq:intra_decomp}), recall that $Z_i = (Z_i^s, Z_i^u)$. Applying the chain rule of mutual information the objective decomposes as:
\begin{equation}
\begin{aligned}
I(Z_i; Y) &= I(Z_i^s, Z_i^u; Y) \\
&= I(Z_i^u; Y) + I(Z_i^s; Y \mid Z_i^u).
\end{aligned}
\end{equation}
\end{proof}

\section{Experimental Details}
\label{app:exp}

\begin{table*}[h]
\centering
\footnotesize
\captionsetup{skip=4pt}
\caption{Overview of Datasets and Statistics.}
\label{tab:dataset_stats}
\resizebox{0.8\textwidth}{!}{%
\begin{tabular}{lcccc}
\toprule
\textbf{Dataset} & \textbf{Classes} & \textbf{Raw Size} & \textbf{Flow-level Captures} & \textbf{Application} \\
\midrule
CipherSpectrum-1~\cite{11023502} & 11 & 194 MB & 4400 & \multirow{4}{*}{Encrypted Application Classification} \\
CipherSpectrum-2~\cite{11023502} & 10 & 340 MB & 4000 & \\
CipherSpectrum-3~\cite{11023502} & 10 & 645 MB & 4000 & \\
CipherSpectrum-4~\cite{11023502} & 10 & 285 MB & 4000 & \\
\midrule
UNSW 2018~\cite{18tmc} & 16 & 882 MB & 51442 & IoT Device Classification \\
\midrule
CICIoT 2023~\cite{s23135941} & 2 & 1.61 MB & 6076 & Attack Traffic Classification \\
\bottomrule
\end{tabular}%
}
\end{table*}

\textbf{Baselines.} In our comparative analysis, we
consider various state-of-the-art methods for comprehensive evaluation. The first category comprises flow
statistics approaches reliant on manual feature engineering, such as AppScanner \cite{taylor2017robust} and FlowPrint \cite{van2020flowprint}. The second category involves raw-bit/byte encoding methods exemplified by TFE-GNN\cite{zhang2023tfe} and nPrint \cite{holland2021new}. For the reproduction of nPrint, we adopt its specific feature encoding scheme and subsequently feed the generated representations into a classifier for performance evaluation. The third category encompasses pre-training paradigms, including YaTC \cite{zhao2023yet} and Netmamba \cite{wang2024netmamba}. For these models, we leverage their open-source pre-trained weights and perform fine-tuning on our specific downstream tasks.

\textbf{Metrics.} The performance of our model is quantified using a comprehensive suite of four evaluation metrics: Accuracy (ACC), and the macro-averaged Precision (PRE), Recall (REC), and F1-score (F1).

\begin{table*}[h]
\centering
\scriptsize
\captionsetup{skip=4pt}
\caption{Comparison of Network Traffic Classification Methods: Design Choices \& Explicit Layer Focus. \textbf{H/P Distinction?} denotes whether the method independently attends to and distinguishes between Header (H) and Payload (P). \textbf{Explicit Layer Focus} indicates the specific network layers providing utilizable representations. \textbf{Symbols:} \cmark~= Yes/Distinguished; \xmark~= No/Unified; \mused~= Layer Utilized; \mnotused~= Not Utilized.}
\label{tab:ntc_comparison_centered_dots}

\resizebox{0.75\textwidth}{!}{
\begin{tabular}{c | c | c | cccc}
\toprule
\multicolumn{1}{c|}{\multirow{3}{*}{\textbf{Category}}} & \multicolumn{1}{c|}{\multirow{3}{*}{\textbf{Method}}} & \textbf{Design Choice} & \multicolumn{4}{c}{\textbf{Explicit Layer Focus}} \\
\cmidrule(lr){3-3} \cmidrule(lr){4-7}
 & & \textbf{H/P Distinction?} & \textbf{Link (L2)} & \textbf{Net (L3)} & \textbf{Trans (L4)} & \textbf{App (L7)} \\
\midrule

% Flow Statistics (AppScanner, FlowPrint, FS-Net)
\multirow{8}{*}{\shortstack{\textbf{Flow}\\ \textbf{Statistics}}} 
 & \multirow{2}{*}{\textbf{AppScanner}} & \xmark & \mnotused & \mused & \mused & \mnotused \\
 & & \textit{(No Content)} & & & & \\
 \cmidrule{2-7}
 & \multirow{2}{*}{\textbf{FlowPrint}} & \cmark & \mnotused & \mused & \mused & \mused \\
 & & \textit{(Explicit Feat.)} & & & & \\
 \cmidrule{2-7}
 & \multirow{2}{*}{\textbf{FS-Net}} & \xmark & \mnotused & \mused & \mused & \mused \\
 & & \textit{(Raw Sequence)} & & & & \\
\midrule

% Raw Packets (TFE-GNN, nPrint)
\multirow{5}{*}{\shortstack{\textbf{Raw}\\ \textbf{Packets}}} 
 & \multirow{2}{*}{\textbf{TFE-GNN}} & \cmark & \mnotused & \mused & \mused & \mused \\
 & & \textit{(Dual Embed.)} & & & & \\
 \cmidrule{2-7}
 & \multirow{2}{*}{\textbf{nPrint}} & \cmark & \mused & \mused & \mused & \mused \\
 & & \textit{(Bit-by-Bit)} & & & & \\
\midrule

% Pretrained Embedding (PERT, ET-BERT, YaTC, NetMamba)
\multirow{10}{*}{\shortstack{\textbf{Pretrained}\\ \textbf{Embedding}}} 
 & \multirow{2}{*}{\textbf{PERT}} & \xmark & \mnotused & \mnotused & \mnotused & \mused \\
 & & \textit{(Payload Only)} & & & & \\
 \cmidrule{2-7}
 & \multirow{2}{*}{\textbf{ET-BERT}} & \xmark & \mnotused & \mnotused & \mnotused & \mused \\
 & & \textit{(Mixed Feat.)} & & & & \\
 \cmidrule{2-7}
 & \multirow{2}{*}{\textbf{YaTC}} & \cmark & \mnotused & \mnotused & \mnotused & \mused \\
 & & \textit{(MFR Matrix)} & & & & \\
 \cmidrule{2-7}
 & \multirow{2}{*}{\textbf{NetMamba}} & \cmark & \mnotused & \mnotused & \mnotused & \mused \\
 & & \textit{(Stride Sequence)} & & & & \\
\midrule

\multirow{2}{*}{\textbf{Proposed}} 
 & \multirow{2}{*}{\sysname} & \cmark & \mused & \mused & \mused & \mused \\
 & & \textit{(Explicit Fusion)} & & & & \\

\bottomrule
\end{tabular}
}
\vspace{-0.3cm}
\end{table*}

\textbf{Implementation Details.} \textit{Data Preparation.} Each raw PCAP file is partitioned into individual bidirectional flows. To standardize the input representation, we segment each flow into a fixed length based on packet granularity, applying truncation or zero-padding where necessary. We apply a unified splitting strategy across all experiments: an 8:1:1 ratio for large-scale datasets, and a 9:1 training-test split for small-scale categories like CipherSpectrum subsets. To ensure model robustness and avoid "shortcuts" identified in recent literature \cite{11023502}, we strictly adhere to their artifact masking protocols across the L2, L3, L4, and L7 layers or even directly remove such shortcuts, to prevent the model from learning spurious correlations or overfitting to temporal session states. This rigorous preprocessing forces the classifier to learn intrinsic, generalizable traffic patterns rather than memorizing specific network configurations or session fingerprints.

\textit{Feature Engineering.} Building upon the preprocessed flows described above, for baseline replication, we utilize raw datagram bytes, while our proposed framework employs nPrint for standardized bit-level flow encoding. We primarily focus on metadata from the Link (L2), Network (L3), and Transport (L4) layers alongside Encrypted Payloads (L7). Notably, for encrypted application classification, we rely exclusively on L3 and L4 headers, as empirical evidence suggests that modern encryption renders intrinsic payload patterns unlearnable beyond their length \cite{11023502}.

\textit{Model Architecture.} To eliminate dimensional redundancy and distill compact latent representations, the architecture implements a progressive dimensionality reduction strategy. Specifically, the input feature dimension is mapped through consecutive layers to progressively reduced sizes (e.g., 512, 256, and 128 units). This stepwise compression forces the network to filter out non-discriminative features and condense sparse bit-level information. To prevent overfitting during this reduction, a dropout rate of 0.5 is applied across all hidden layers.

\textit{Training Configuration.} We employ the Adam optimizer with an initial learning rate set to $1.0 \times 10^{-3}$ or $2.0 \times 10^{-3}$. To address class imbalance, a class-balanced global classification loss with $\beta = 0.99$ is introduced to re-weight samples based on their effective volume. All experiments were conducted on a high-performance server equipped with an NVIDIA Tesla V100-32GB GPU and an Intel(R) Xeon(R) Gold 6130 CPU @ 2.10GHz to ensure consistent computational benchmarks.